\documentclass[reqno,11pt]{amsart}
\usepackage{amssymb,amsmath,amsthm,amsfonts,times,graphicx}

\usepackage{color,setspace,wrapfig,multicol}
\usepackage[labelformat=empty]{subcaption}

\usepackage[left=3.5cm,right=3.5cm,top=3cm,bottom=3cm]{geometry}

\usepackage[colorlinks=true, pdfstartview=FitV, linkcolor=blue, 
            citecolor=blue, urlcolor=blue]{hyperref}

\numberwithin{equation}{section} 

\frenchspacing

%%%%%%%%%%%%%%%%%%%%%%%%%%%%%%%%
\newcommand{\C}{\mathbb{C}}
\newcommand{\R}{\mathbb{R}}
\newcommand{\T}{\mathbb{T}}
\newcommand{\Z}{\mathbb{Z}}
\renewcommand{\H}{\mathbb{H}}
\newcommand{\V}{\mathbb{V}}

\newcommand{\cE}{\mathcal{E}}
\newcommand{\cF}{\mathcal{F}}

\newcommand{\cS}{\textit{S}}
\newcommand{\cC}{\textit{C}}
%%%%%%%%%%%%%%%%%%%%%%%%%%%%%%%%

\renewcommand{\epsilon}{\varepsilon}

\renewcommand{\d}{\mathrm{d}}
\renewcommand{\hat}{\widehat }

\newcommand{\ess}{\mathrm{ess}}

\newcommand{\miqdor}{\mathrm{n}}

{\bf}{\it}
\newtheorem{theorem}{Theorem}[section]
\newtheorem{lemma}[theorem]{Lemma}

\newtheorem{proposition}[theorem]{Proposition}
\newtheorem{remark}[theorem]{Remark}

%%%%%%%%%%%%%%%%%%%

\newcommand{\blue}{\color{blue}}
\newcommand{\black}{\color{black}}
%%%%%%%%%%%%%%%%%%%%%%%%%%%%%%%%%%%%%%%%%%%%%%%%%%%%%%%%

\date{\today}

\date{\today}
\begin{document}

%\onehalfspace

\title[Bose-Hubbard model]{Bose-Hubbard models with on-site and nearest-neighbor interactions: Exactly solvable case}

\author{Saidakhmat Lakaev, Shokhrukh Kholmatov,
Shakhobiddin Khamidov}

\address[S. Lakaev]{Samarkand State University, University boulevard, 15, 140104 Samarkand, Uzbekistan}
\email{slakaev@mail.ru}

\address[Sh. Kholmatov]{University of Vienna,
Oskar-Morgenstern-Platz 1, 1090  Vienna, Austria}
\email{shokhrukh.kholmatov@univie.ac.at}

\address[Sh. Khamidov]{Samarkand State University, University boulevard, 15, 140104 Samarkand, Uzbekistan}
\email{shoh.hamidov1990@mail.ru}

\begin{abstract}
We study the discrete spectrum of the two-particle Schr\"odinger
operator $\hat H_{\mu\lambda}(K),$ $K\in\T^2,$ associated to the
Bose-Hubbard Hamiltonian $\hat  \H_{\mu\lambda}$ of a system of two identical bosons interacting on site and nearest-neighbor sites in the two dimensional lattice $\Z^2$  with interaction magnitudes
$\mu\in\R$ and $\lambda\in\R,$ respectively. We completely describe
the spectrum of $\hat H_{\mu\lambda}(0)$ and establish the optimal lower
bound for the number of eigenvalues of $\hat H_{\mu\lambda}(K)$ outside
its essential spectrum for all values of $K\in\T^2.$ Namely, we
partition the $(\mu,\lambda)$-plane such that in each connected
component of the partition the number of bound states of
$\hat H_{\mu\lambda}(K)$ below or above its essential spectrum cannot be
less than the corresponding number of bound states of
$\hat H_{\mu\lambda}(0)$ below or above its essential spectrum.
\end{abstract}

\keywords{Two-particle system, discrete Schr\"odinger operator,
essential spectrum, bound states, Fredholm determinant}

\maketitle

\section{Introduction}

In this paper we consider the family $\hat H_{\mu\lambda}(K)$ Schr\"odinger operators, associated to the Bose-Hubbard Hamiltonian
of a system of two identical bosons on the two dimensional lattice
$\Z^2$ with on-site interaction $\mu\in\R$ and nearest-neighbor
interaction $\lambda\in\R.$

Lattice Bose-Hubbard models have become popular in recent years
since they represent a minimal, natural Hamiltonian in ultracold
atoms in optical lattices, systems with highly controllable
parameters  such as lattice geometry and dimensionality, particle
masses, tunneling, two-body potentials, temperature etc.  (see e.g.,
\cite{B:2005_nat.phys,JBC:1998_phy.rev,JZ:2005_ann.phys,LSA:2012_book}
and references therein). Unlike the traditional condensed matter
systems, where stable composite objects are usually formed by
attractive forces and repulsive forces separate particles in free
space, the controllability of collision properties of ultracold
atoms has enabled to experimentally observe a stable
\emph{repulsive} bound pair of ultracold atoms in the optical
lattice $\Z^3$, see e.g.,
\cite{HWCR:2012,OOHE:2006,WTLG:2006_nature,ZNON:2008}. In all these
observations Bose-Hubbard Hamiltonians became a link between the
theoretical basis and experimental results.

The main difficulty in solving Bose-Hubbard Hamiltonian even with
the on-site interaction is due to the tunneling, i.e., the kinetic
energy necessary for a boson to hop from site to site, since it is
highly nonlocal. Moreover, unlike its continuous counterpart,  the
lattice Hamiltonian, corresponding to short-range interacting sytems
of particle pairs, is not rotationally invariant, hence, the
separation of lattice Hamiltonian related to the center of
mass-motion is not possible. However, the translation-invariance of
the Hamiltonian (in $d$-dimensional lattice $\Z^d$) allows to use
the Floquet-Bloch decomposition: the underlying Hilbert space
$\ell^2 (\Z^d)^2$ and the total two-particle Hamiltonian $\hat
\H$ are represented as the von Neumann direct integral
\begin{equation*}
\ell^2[(\Z^d)^2] \simeq \int\limits_{K\in \T^d} \oplus
\ell^2(\Z^d)\d K, \qquad \hat \H\simeq\int\limits_{K\in \T^d} \oplus
\hat H(K)\d K,
\end{equation*}
associated to the representation of the discrete group $\Z^d$ by the
shift operators, here $\T^d$ is the $d$-dimensional torus, a
Pontryagin dual group to $\Z^d.$  The fiber Hamiltonians $\hat
H_{\mu}(K)$ nontrivially depend on the so-called {\it
quasi-momentum} $K\in\T^d$ (see e.g.,
\cite{ALMM:2006_cmp,FIC:2002_phys.rev,M:1986_rev:phys,M:1991_adv.sov}).

In what follows any nonzero solution $\hat \psi_K$ to the Schr\"odinger
equation
$$
\hat H(K) \hat \psi_K=e_K \hat \psi_K,  \qquad \hat \psi_K \in \ell^2 (\Z^d),
$$
will be called bound state with energy $e_K$.

In the current paper, we study the family
$$
\hat H_{\mu\lambda}(K):=\hat H_0(K) + \hat V_{\mu\lambda},\qquad
K\in\T^2,
$$
discrete Schr\"odinger operators associated to the Hamiltonian
$\hat\H_{\mu\lambda}$ of a system of two identical bosons on the two
dimensional lattice $\Z^2$ with zero-range on-site interaction
$\mu\neq0$ and nearest-neighbor interaction $\lambda\neq0$ (see \eqref{hash_shapka} in 
Subsection \ref{subsec:von_neuman}). To the best of our knowledge,
this is a new, exactly solvable model, for which the exact number of
eigenvalues and their locations as well as exact lower and upper
bounds for all values of the pair interactions $\mu, \lambda \in \R$
can be found.

First we observe that the essential spectrum of $\hat
H_{\mu\lambda}(K)$ consists of a segment
$[\cE_{\min}(K), \allowbreak \cE_{\max}(K)],$ where
$$
\cE_{\min}(K):= 2\sum\limits_{i=1}^2\Big(1-\cos \tfrac{K_{i}}2\Big),
\qquad \cE_{\max}(K):= 2\sum\limits_{i=1}^2\Big(1+\cos
\tfrac{K_{i}}2\Big)
$$
(see Subsection \ref{subsec:ess_spec}).

Now we study the discrete spectrum of $\hat H_{\mu\lambda}(K).$ The
eigenvalue problem for $\hat H_{\mu\lambda}(K)$ with general $K$ is
not an  easy problem; note that $\hat H_{\mu\lambda}(K)$ becomes a
small perturbation of $\hat V_{\mu\lambda}$ if $K$ is close to $\vec\pi:=(-\pi,\pi),$ and $\hat V_{\mu\lambda}$ is a small
pertubation of $\hat H_0$ provided that $\mu,\lambda$ is small.
However, it turns out that for any $\mu,\lambda\in\R,$ if
$e_n(K;\mu,\lambda)$ and $E_n(K;\mu,\lambda)$ are the $n$-th
eigenvalues of $\hat H_{\mu\lambda}(K)$ below and above the
essential spectrum, respectively, then the functions $\cE_{\min}(K)
- e_n(K;\mu,\lambda) $ and $E_n(K;\mu,\lambda) - \cE_{\max}(K)$ as a
function of $K$ will have a minimum at $K=0$ (see Lemma
\ref{lem:monoton_xos_qiymat}). Subsequently, the number of
eigenvalues $\miqdor_+(\hat H_{\mu\lambda}(K))$ (resp.
$\miqdor_-(\hat H_{\mu\lambda}(K))$) of $\hat H_{\mu\lambda}(K)$
above (resp. below) the essential spectrum is not less than that of
$\hat H_{\mu\lambda}(0)$ (Theorem \ref{teo:disc_Kvs0}). This result is a generalization of \cite[Theorems 1 and 2]{ALMM:2006_cmp} which contain the assertion related only to the ground state of $\hat H_{\mu\lambda}(K).$

To study the discrete spectrum of $\hat H_{\mu\lambda}(0)$ we introduce the Fredholm determinants $\Delta_{\mu\lambda}^{(s)}(z)$ and $\Delta_{\lambda}^{(a)}(z),$ associated to the restriction of $\hat H_{\mu\lambda}(0)$ onto symmetric and antisymmetric functions in $\ell^{2,e}(\Z^2)= \ell^{2,e}(\Z\times\Z)$ w.r.t. coordinate permutations\footnote{Note that the terms 'symmetric' and 'antisymmetric' are not related to the particle symmetry, but rather to the invariant subspaces of $\hat H_{\mu\lambda}(0)$ related to the permutation of coordinates in $\Z^2:=\Z\times\Z$ of one-particle.}, respectively. It is well-known that eigenvalues of $H_{\mu\lambda}(0)$ are zeros of these determinants, and vice versa (see also Lemma \ref{lem:det_zeros_vs_eigen}). Moreover, {\it the number of zeros of $\Delta_{\mu\lambda}^{(s)}(z)$ and $\Delta_{\lambda}^{(a)}(z)$ can  change if and only if their asymptotics  as $z$ approaches to the threholds of the essential spectrum vanish.}

Therefore, we partition $(\mu,\lambda)$-plane of interactions into connected components by means of hyperbolas $\lambda\mu + 4\lambda+2\mu=0$ and $\lambda\mu - 4\lambda-2\mu=0$ and straight lines $\lambda=\pm\frac{\pi}{4-\pi}$ (see Figure \ref{fig:sohalar}), where 
$\lambda\mu \pm 4\lambda\pm2\mu$  and $\lambda=\frac{\pi}{4-\pi}$  appear as constant in front of the main term in the asymptotics of the Fredholm determinants $\Delta_{\mu\lambda}^{(s)}(z)$ and $\Delta_{\lambda}^{(a)}(z),$ respectively, as $z$ converges to the edges of the essential spectrum (Proposition \ref{prop:asymp_determinants}). 
As we noticed above, {\it the number of eigenvalues  of $H_{\mu\lambda}(0)$  changes if and only if one of the 'constants' $\lambda\mu \pm 4\lambda\pm2\mu $ and $\lambda\pm\frac{\pi}{4-\pi}$ changes its sign} (see also Lemmas \ref{lem:antisim_zeros}-\ref{lem:sim_negative_zeros}).

%is preserved if and only if  $(\mu,\lambda)$  runs inside a connected component of our partition by hyperbolas $\lambda\mu \pm 4\lambda\pm2\mu $ and straight lines $\lambda\pm\frac{\pi}{4-\pi}.$ %changes its sign (Lemmas \ref{lem:antisim_zeros}-\ref{lem:sim_negative_zeros}).  

%
\begin{wrapfigure}{L}{0.5\textwidth} 
\centering 
\includegraphics[width=0.48\textwidth]{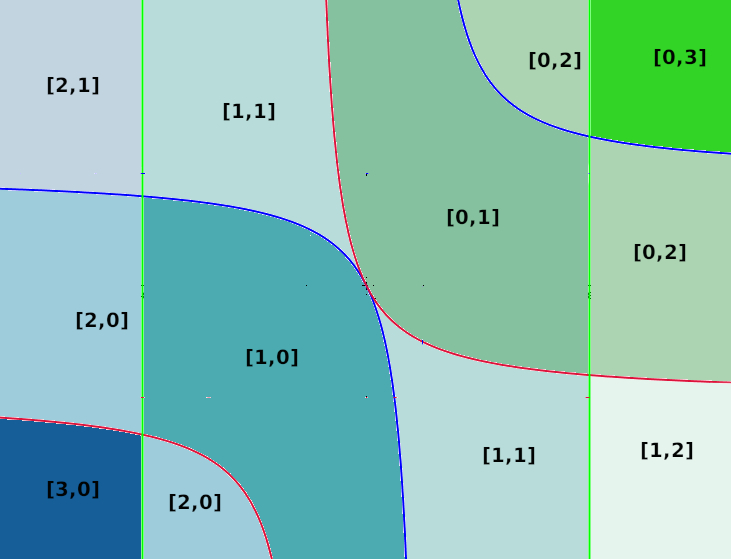} 
\caption{\small A partition of the $(\mu,\lambda)$-plane using the hyperbolas $\lambda\mu\pm 4\lambda\pm2\mu=0$ and straight lines $\lambda=\pm\frac{\pi}{4-\pi}.$ The figure shows also the dynamics of change in the number of eigenvalues: in the region $[i,j]$ the operator $H_{\mu\lambda}(0)$ has exactly $i$ eigenvalues below the essential spectrum and exactly $j$ eigenvalues above the essential spectrum. If $(\mu,\lambda)$ moves from the region $[i,j]$ to the region $[i-1,j]$ resp. $[i,j-1]$ one eigenvalue of $H_{\mu\lambda}(0)$  is absorbed into the essential spetrum at the lower resp. upper edge of the essential spectrum. Conversely, if $(\mu,\lambda)$ moves from the region $[i,j]$ to the region $[i+1,j]$ resp. $[i,j+1]$ one eigenvalue of $H_{\mu\lambda}(0)$  is released the essential spetrum at the lower resp. upper edge of the essential spectrum. This absoption and release of eigenvalues can occur simultaneously: for instance, if $(\mu,\lambda)$ moves from the region $[1,0]$ to $[0,1]$ then one eigenvalue is absorbed at the bottom and another releases at the top.} \label{fig:sohalar}
\end{wrapfigure}
Hence, while $(\mu,\lambda)$ runs in $\R^2$ and does not cross any of these curves, no qualitative or quantitative changes occur in the discrete spectrum of $\hat H_{\mu\lambda}(0);$ however, as soon as $(\mu,\lambda)$ crosses any of those hyperbolas resp. straight lines, the essential spectrum of $\hat H_{\mu\lambda}(0)$ either 'gives birth' or 'absorb' a bound state of $\hat H_{\mu\lambda}(0)$ which is symmetric resp. antisymmetric w.r.t. permutation of coordinates  (Theorem \ref{teo:sharpness}).

This fact is tightly connected to \emph{coupling
constant threshold phenomenon}  \cite{KS:1980_ann.phys,LKh:2012_izv,LKh:2011_jpa}: if $H(t),$ $t\ge0,$ is a one-parameter family of Schr\"odinger operators (in continuum or lattice) and $0$ is a lower edge of the essential spectrum, then $t_0\ge0$ is a coupling constant threshold if and only if there exists a negative eigenvalue $E(t)$ of $H(t)$ for $t>t_0$ such that $E(t)\nearrow0$ as $t\searrow t_0,$ i.e., as
$t\searrow t_0$ an eigenvalue is absorbed at the threshold of continuum, and conversely, as $t\nearrow t_0+\epsilon$ an eigenvalue
is released from the continuum. Indeed, considering $\hat H(t):=\hat H_{t\mu,t\lambda}$ for $t\ge0,$ from Figure \ref{fig:sohalar} below we immediately obtain that the only coupling constant thresholds are $t_0=0,$ $t_0=\big|\frac{4\lambda + 2\mu}{\lambda\mu}\big|$ (when $\lambda\mu\ne0$) and $t_0=\frac{\pi}{4-\pi}.$

Surprisingly, the maximum number of isolated eigenvalues is achieved
only in four connected components in which both $\mu$ and $\lambda$
run on infinite intervals.

In general, some results such as existence of an eigenvalue and also the finiteness of the number of eigenvalues can be obtained for wide classes of operators (see e.g., \cite{KhLA:2020_arxiv,K:1977_ann.phys,KS:1980_ann.phys,S:1976_ann.phys}). However,  Figure \ref{fig:sohalar} shows  that the study of a qualitative  change in the number of eigenvalues of $\hat H_{\mu\lambda}(K),$ even for $K=0,$ is very delicate: there are discs in the $(\mu,\lambda)$-plane with arbitrarily small radius in which the number of eigenvalues has jump (see Theorem \ref{teo:sharpness}).

Recall that in \cite{KhLA:2020_arxiv} authors extend the results of
\cite{K:1977_ann.phys,S:1976_ann.phys} to the lattice case. Namely,
the spectral properties of one-particle discrete Schr\"odinger
operator
$$
\hat {\bf h}_t=\hat {\bf h}_0 + t \hat {\bf v},\qquad t\ge0,
$$
in $\Z^1$ and $\Z^2$ have been studied, here $\hat {\bf h}_0$ is a
self-adjoint Laurent-Toeplitz-type operator generated by a
dispersion relation $\hat \cE : \Z^d\to \C$ of the particle and the
potential $\hat {\bf v}$ is the multiplication operator by $\hat
v:\Z^d\to\R.$ Under certain regularity assumption on $\hat\cE $ and
a decay assumption on $\hat v$, authors show that  if $\sum\hat
v(x)$ is nonnegative resp. nonpositive, then the discrete spectrum
of $\hat{\bf h}_t$ above resp. below its essential spectrum is
non-empty for any $t>0.$ Moreover, authors prove  the existence of
$t_0>0$ (depending only on $\hat v$ and $\hat \cE$) such that
\begin{itemize}
\item[(a)]  if $\sum\hat v(x)>0,$ then for any $t\in(0,t_0)$ the  operator $\hat{\bf h}_t$ has no eigenvalues below the essential spectrum and has a unique eigenvalue above the essential spectrum;

\item[(b)]  if $\sum\hat v(x)<0,$ then for any $t\in(0,t_0)$ the  operator $\hat{\bf h}_t$ has no eigenvalues above the essential spectrum and has a unique eigenvalue below the essential spectrum;

\item[(c)]  if $\sum\hat v(x)=0,$ then for any $t\in(0,t_0)$ the  operator $\hat{\bf h}_t$ has unique eigenvalues in both below and above the essential spectrum.
\end{itemize}
\begin{wrapfigure}{l}{0.5\textwidth}
\centering 
\includegraphics[width=0.49\textwidth]{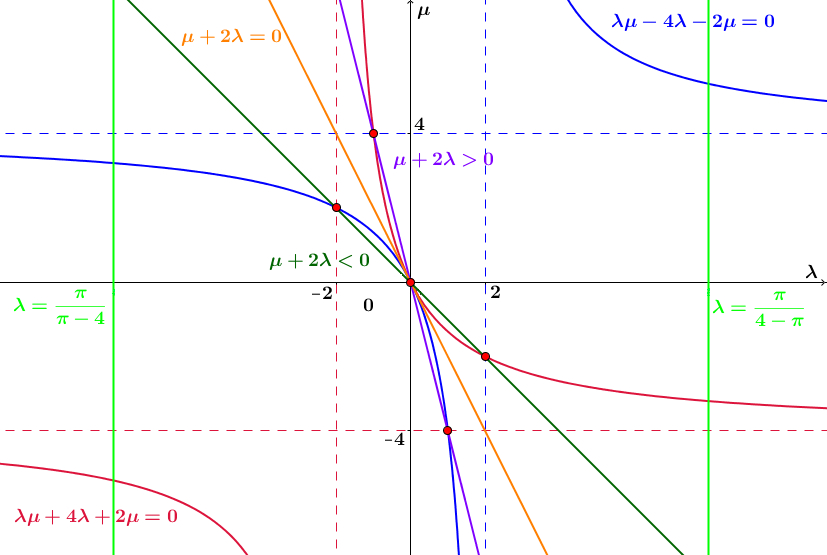} 
\caption{\small The definition of $t_0$ given by \cite{KhLA:2020_arxiv}.} \label{fig:klaus_grafik}
\end{wrapfigure}
Note that for our model $\sum\hat v(x)=\mu+2\lambda.$ Hence, our results for the discrete spectrum of $\hat H_{\mu\lambda}(0)$ show the (exact!) dependence of $t_0$ on $\mu$ and $\lambda:$ namely, $t_0$ will the intersection of the half-line $\{(t\mu,t\lambda):\,\,t\ge0\}$ with the branches of hyperbolas $\lambda\mu \pm 4\lambda\pm2\mu =0$ passing through the origin (see Figure \ref{fig:klaus_grafik}).

Moreover, from Figure \ref{fig:klaus_grafik} we observe that the unique eigenvalue of \cite{KhLA:2020_arxiv} appears from the symmetric space and eigenvalues in the antisymmetric space come out later (see Theorem \ref{teo:KSexistence}).

In general, it is well-known that in the case of $d\ge3$ or in the case of fermions with $d\ge1$ the bound states appear from the essential spectrum either as a threshold bound state or as a threshold resonance \cite{LA:2020_tmf,LB:2009_tmf}. However, our results show that in the case $d=2,$ even though antisymmetric bound states come out from the threshold eigenvalues, all symmetric bound states of $\hat H_{\mu\lambda}(0)$ appear (only!) from the singularity of the associated Fredholm determinant at the thresholds, namely, associated Fredholm determinant cannot be analytically (even continuously) extended to the edges of the essential spectrum (see Proposition \ref{prop:asymp_determinants}). Such a result holds also in $d=1$ (see \cite{LO:2014_2016}). %and also in the continuum case (\cite{KS:1980_ann.phys}). 
This is a strict mathematical explanation in the difference of the appearance of eigenvalues  for $d=1,2$ and $d\ge3,$ or for bosons and fermions.

As far as we know for continuous two-body Schr\"odinger operators $\R^2$ there are no analogous examples and results. In the discrete case similar results for the number of eigenvalues of one-particle Schr\"odinger operators in $\Z^d$ with zero-range on-site and nearest-neighbor interactions have been obtained, for instance, in \cite{LB:2009_tmf} for $d=3$ with attractive potential field, in \cite{LO:2014_2016} for $d=1,$ and in \cite{HMK:2020_lmalg} for all $d\ge1$ considering only negative eigenvalues.

The paper is organized as follows: In Section \ref{sec:hamiltonian}
we introduce the Hamiltonian $\hat\H_{\mu\lambda}$ of a system of two
bosons in the position and momentum space and  also the
Schr\"{o}dinger operator $\hat H_{\mu\lambda}(K)$ associated to
$\hat\H_{\mu\lambda}$. Main results of the paper are stated in Section
\ref{sec:main_results} and their proofs are contained in Section
\ref{sec:proofs}. We conclude the paper with an appendix containing
the proof of Proposition \ref{prop:asymp_determinants}.

\section{Discrete Schr\"odinger operators on lattices} \label{sec:hamiltonian}

\subsection{The two-particle Hamiltonian: the position-space representation}

Let $\Z^2$ be the two dimensional lattice and let
$\ell^{2,s}(\Z^2\times\Z^2)$ be the Hilbert space of square-summable
symmetric functions on $\Z^2\times\Z^2.$

In the position-space representation the two particle
Hamiltonian $\hat \H_{\mu\lambda},$ associated to a system of two
bosons interacting via zero-range and nearest-neighbor potential
$\hat v_{\mu\lambda}$ is a bounded self-adjoint operator
acting in $\ell^{2,s}(\Z^2\times\Z^2)$ as
\begin{equation*}\label{two_total}
\hat \H_{\mu\lambda}=\hat \H_0 + \hat \V_{\mu\lambda},\qquad
\mu,\lambda\in\R.
\end{equation*}
Here the free Hamiltonian $\hat \H_0$  of a system of two identical particles (bosons) is a bounded self--adjoint operator acting in $\ell^{2,s}(\Z^2\times\Z^2)$ as
\begin{equation*}
\hat \H_0 \hat f(x,y)= \sum_{n\in\Z^2} \hat \epsilon(x-n) \hat
f(n,y) + \sum\limits_{n\in\Z^2} \hat \epsilon(y-n) \hat f(x,n),
\end{equation*}
where
\begin{equation}\label{def:epsilon}
\hat \epsilon(s) =
\begin{cases}
2 & \text{if $|s|=0,$}\\
-\frac{1}2 & \text{if $|s|=1,$}\\
0 & \text{if $|s|>1,$}
\end{cases} 
\end{equation}
and $|s|=|s_1|+|s_2|$ for $s=(s_1,s_2)\in \Z^2.$

The interaction $\hat \V_{\mu\lambda}$ is the multiplication
operator
\begin{equation*}\label{interaction}
\hat \V_{\mu\lambda} \hat f (x,y) = \hat v_{\mu\lambda}(x-y) \hat
f(x,y),
\end{equation*}
given by the function
\begin{equation}\label{def:potentials}
\hat v_{\mu\lambda}(s)=
\begin{cases}
\mu & \text{if $|s|=0,$}\\
\frac{\lambda}2 & \text{if $|s|=1,$}\\
0 & \text{if $|s|>1.$}
\end{cases} 
\end{equation}

\subsection{The two-particle Hamiltonian: the momentum-space representation}

Let $\T^2= \allowbreak ( \R /2\pi \Z)^2  \equiv [-\pi,\pi)^2$ be the two
dimensional torus, the Pontryagin dual group of $\Z^2$, equipped
with the Haar measure $\d p,$and let $L^{2,s}(\T^2\times\T^2)$ be
the Hilbert space of square-integrable symmetric functions on
$\T^2\times\T^2.$ Let $\cF:\ell^2(\Z^2)\rightarrow L^2(\T^2)$ be the
standard Fourier transform
$$
\cF \hat f(p)=\frac{1}{2\pi} \sum_{x\in\Z^2} \hat f(x) e^{ip\cdot
x},
$$
where $p\cdot x: = p_1x_1+p_2x_2$ for $p=(p_1,p_2)\in\T^2$ and
$x=(x_1,x_2)\in\Z^2.$

In the momentum-space via $\cF\otimes\cF$ the two-particle
Hamiltonian is represented in $L^{2,s}(\T^2\times \T^2)$ as
$$
\H_{\mu\lambda}:=(\cF\otimes\cF) \hat
\H_{\mu\lambda}(\cF\otimes\cF)^*:=\H_0 + \V_{\mu\lambda}.
$$
Here the free Hamiltonian $ \H_0=(\cF \otimes \cF) \hat \H_0
(\cF\otimes \cF)^*$ is the multiplication operator:
$$
\H_0 f(p,q) = [\epsilon(p) + \epsilon(q)]f(p,q),
$$
where
$$
\epsilon(p) := \sum\limits_{i=1}^2 \big(1-\cos p_i),\quad
p=(p_1,p_2)\in \T^2,
$$
is the \emph{dispersion relation} of a single boson.
The interaction  $\V_{\mu\lambda}=(\cF \otimes \cF)\hat\V_{\mu\lambda} (\cF\otimes\cF)^*$ is the (partial) integral operator
$$
\V_{\mu\lambda} f(p,q) = \frac{1}{(2\pi)^2}\int_{\T^2}
v_{\mu\lambda}(p-u) f(u,p+q-u)\d u,
$$
where
$$
v_{\mu\lambda}(p)=\mu+\lambda\sum_{i=1}^2\cos p_i,\quad p=(p_1,p_2)\in \T^2.
$$

\subsection{The Floquet-Bloch decomposition of $\H_{\mu\lambda}$ and discrete Schr\"odinger operators}\label{subsec:von_neuman}

Since $\hat H_{\mu\lambda}$ commutes with the representation of the
discrete group $\Z^2$ by shift operators on the lattice, we can
decompose the space $L^{2,s}(\T^2\times\T^2)$ and $\H_{\mu\lambda}$
into the von Neumann direct integral as
\begin{equation}\label{hilbertfiber}
L^{2,s}(\T^2\times \T^2)\simeq \int\limits_{K\in \T^2} \oplus
L^{2,e}(\T^2)\,\d K
\end{equation}
and
\begin{equation}\label{fiber}
\H_{\mu\lambda} \simeq \int\limits_{K\in \T^2} \oplus
H_{\mu\lambda}(K)\,\d K,
\end{equation}
where $L^{2,e}(\T^2)$ is the Hilbert space of square-integrable even
functions on $\T^2$ (see, e.g., \cite{ALMM:2006_cmp}).

The fiber operator $H_{\mu\lambda}(K),$ $K\in\T^2,$ is a self-adjoint operator  defined in $L^{2,e}(\T^2)$ as
\begin{equation*}
H_{\mu\lambda}(K) := H_0(K) + V_{\mu\lambda},
\end{equation*}
where the unperturbed operator $H_0(K)$ is the multiplication operator by the function
$$
\cE_K(p):= 2 \sum_{i=1}^2\Big(1-\cos\tfrac{K_i}2\,\cos p_i\Big),
$$
and the perturbation  $V_{\mu\lambda}$ is defined as
$$
V_{\mu\lambda} f(p)= \frac{1}{(2\pi)^2}\int_{\T^2}  \Big(\mu+\lambda
\sum\limits_{i=1}^2\cos p_i\cos q_i\Big) f(q)\d q.
$$

In the literature the parameter $K\in\T^2$ is called the
\emph{two-particle quasi-momentum} and the fiber $H_{\mu\lambda}(K)$
is called the \emph{discrete Schr\"odinger operator} associated to
the two-particle Hamiltonian $\hat \H_{\mu\lambda}.$

Using the Fourier transform \eqref{hilbertfiber} and \eqref{fiber}
can be represented as
$$
\ell^{2,s}(\Z^2\times \Z^2)\simeq \int\limits_{K\in \T^2} \oplus
\ell^{2,e}(\Z^2)\,\d K
$$
and
$$
\hat\H_{\mu\lambda} \simeq \int\limits_{K\in \T^2} \oplus \hat
H_{\mu\lambda}(K)\,\d K,
$$
where $\ell^{2,e}(\Z^2)$ is the Hilbert space of square-summable
even functions on $\Z^2$ and
$$
\hat H_{\mu\lambda}(K):=\cF^* H_{\mu\lambda}(K) \cF,
$$
where 
\begin{equation}\label{hash_shapka}
\hat H_{\mu\lambda}(K)=\hat H_0(K) + \hat V_{\mu\lambda}(K), 
\end{equation}
and 
\begin{equation*} 
\hat H_0(K)f(x) \ = \ \sum_{s\in\Z^2}
\hat \cE_K(x-s) \hat f(s),\qquad f\in\ell^{2,e}\Z^2
\end{equation*}
with 
\begin{equation*}
\hat \cE_K(x)=2\hat\epsilon(x)\,\cos\frac{K\cdot x}{2}
\end{equation*}
and the operator $\hat V_{\mu\lambda}$ acts in $\ell^{2,e}(\Z^2)$ as
\begin{equation*}
\hat  V_{\mu\lambda}f(x)= \hat v_{\mu\lambda}(x)f(x),
\end{equation*}%
and the functions $\hat \epsilon$ and $\hat v_{\mu\lambda}$ are given by \eqref{def:epsilon} and  \eqref{def:potentials}, respectively.

\subsection{The essential spectrum of discrete Schr\"odinger operators} \label{subsec:ess_spec}

Since $V_{\mu\lambda}$ has rank at most three, by Weyl's Theorem for
any $K\in\T^2$ the essential spectrum
$\sigma_{\ess}(H_{\mu\lambda}(K))$ coincides with the spectrum of
$H_0(K),$ i.e.,
\begin{equation}\label{eq:essential_spectrum}
\sigma_{\ess}(H_{\mu\lambda}(K))=\sigma(H_0(K)) =
[\cE_{\min}(K),\cE_{\max}(K)],
\end{equation}
where
\begin{align*}
\cE_{\min}(K):= & \min_{p\in  \T ^2}\,\cE_K(p) = 2\sum\limits_{i=1}^2\Big(1-\cos \tfrac{K_{i}}2\Big)\geq 0=\cE_{\min}(0),\\
\cE_{\max}(K):= & \max_{p\in  \T ^2}\,\cE_K(p) =
2\sum\limits_{i=1}^2\Big(1+\cos \tfrac{K_{i}}2\Big)\leq
8=\cE_{\max}(0).
\end{align*}

\section{Main results}\label{sec:main_results}

Our first  main result is the following generalization of \cite[Theorems 1 and
2]{ALMM:2006_cmp}.

\begin{theorem}\label{teo:disc_Kvs0}
Suppose that $H_{\mu\lambda}(0)$ has $n$ eigenvalues below resp.
above the essential spectrum for some $\mu,\lambda\in\R.$ Then for
every $K\in\T^2$ the operator $H_{\mu\lambda}(K)$ has at least $n$
eigenvalues below resp. above its essential  spectrum.
\end{theorem}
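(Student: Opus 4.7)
My plan is to base the proof on a pointwise dispersion inequality combined with the min--max principle. The crucial observation is that for every $K=(K_1,K_2)\in\T^2$ one has $\cos\frac{K_i}{2}\in[0,1]$; hence, for every $p\in\T^2$,
\begin{equation*}
\cE_K(p)-\cE_{\min}(K)=2\sum_{i=1}^2 \cos\tfrac{K_i}{2}\,(1-\cos p_i)\;\le\; 2\sum_{i=1}^2 (1-\cos p_i)=\cE_0(p)-\cE_{\min}(0),
\end{equation*}
and symmetrically
\begin{equation*}
\cE_{\max}(K)-\cE_K(p)=2\sum_{i=1}^2 \cos\tfrac{K_i}{2}\,(1+\cos p_i)\;\le\; 2\sum_{i=1}^2(1+\cos p_i)=\cE_{\max}(0)-\cE_0(p).
\end{equation*}

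Since the perturbation $V_{\mu\lambda}$ does not depend on $K$, the first bound translates into the operator inequality
\begin{equation*}
A(K):=H_{\mu\lambda}(K)-\cE_{\min}(K)\,I \;\le\; H_{\mu\lambda}(0)-\cE_{\min}(0)\,I=:A(0)
\end{equation*}
on $L^{2,e}(\T^2)$, because the difference $A(K)-A(0)$ is multiplication by a nonpositive function. By \eqref{eq:essential_spectrum}, $\sigma_{\ess}(A(K))=[0,\cE_{\max}(K)-\cE_{\min}(K)]$, so eigenvalues of $H_{\mu\lambda}(K)$ lying strictly below $\cE_{\min}(K)$ correspond bijectively, with multiplicities, to negative eigenvalues of $A(K)$.

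Next I invoke the min--max principle. For a bounded self-adjoint operator $T$ on $L^{2,e}(\T^2)$ denote its $n$-th min--max value by
\begin{equation*}
\Lambda_n(T):=\inf_{\substack{L\subset L^{2,e}(\T^2)\\ \dim L=n}}\;\sup_{\substack{f\in L\\ \|f\|=1}}\langle Tf,f\rangle.
\end{equation*}
The monotonicity $A(K)\le A(0)$ gives $\Lambda_n(A(K))\le \Lambda_n(A(0))$ for every $n\ge 1$. If $H_{\mu\lambda}(0)$ has at least $n$ eigenvalues below its essential spectrum, then $\Lambda_n(A(0))$ is the $n$-th of these, so $\Lambda_n(A(0))<0$ and hence $\Lambda_n(A(K))<0=\inf\sigma_{\ess}(A(K))$. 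The standard form of the min--max theorem then forces $\Lambda_n(A(K))$ to be an eigenvalue of $A(K)$, so $A(K)$ has at least $n$ eigenvalues below $0$, i.e.\ $H_{\mu\lambda}(K)$ has at least $n$ eigenvalues below $\cE_{\min}(K)$.

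The assertion for eigenvalues above the essential spectrum is obtained by the same scheme applied to $B(K):=\cE_{\max}(K)\,I-H_{\mu\lambda}(K)$: the second pointwise bound gives $B(K)\le B(0)$, and negative eigenvalues of $B(K)$ correspond to eigenvalues of $H_{\mu\lambda}(K)$ strictly above $\cE_{\max}(K)$; the same min--max argument concludes. The strategy has no real obstacle; the only point needing care is the passage from the operator inequality to the count of discrete eigenvalues outside $\sigma_{\ess}$, which is exactly the content of the min--max principle with cutoff at $\inf\sigma_{\ess}=0$.
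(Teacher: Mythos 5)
Your proposal is correct and follows essentially the same route as the paper: the paper's Lemma \ref{lem:monoton_xos_qiymat} is exactly your quadratic-form comparison $\bigl((H_0(K)-\cE_{\min}(K))\psi,\psi\bigr)\le\bigl((H_0(0)-\cE_{\min}(0))\psi,\psi\bigr)$ (stated there as monotonicity in each component $K_i$), combined with the $K$-independence of $V_{\mu\lambda}$ and the min--max principle to conclude that the shifted min--max values at $K$ lie below those at $0$ and hence below the bottom of the essential spectrum. The only cosmetic difference is that you compare $K$ directly with $0$ via the operator inequality, while the paper records the slightly finer coordinatewise monotonicity before specializing to $K$ versus $0$.
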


Next we find exact lower bound for the number of eigenvalues of
$H_{\mu\lambda}(K)$ depending only on $\mu$ and $\lambda.$

In the $(\mu,\lambda)$-plane let us define the following nine sets:
\begin{equation}\label{nine_sets}
\begin{aligned}
%antisimmetric
\cS_{01}:= & \Big\{(\mu,\lambda)\in\R^2:\,\,\lambda>\frac{\pi}{4-\pi}\Big\},\\
\cS_{00}:= &\Big\{(\mu,\lambda)\in\R^2:\,\,|\lambda|<\frac{\pi}{4-\pi}\Big\},\\
\cS_{10}:= &\Big\{(\mu,\lambda)\in\R^2:\,\,\lambda<\frac{\pi}{\pi-4}\Big\},\\ \allowbreak 
%simmetric right
\cC_0^+:= &\Big\{(\mu,\lambda)\in\R^2:\,\,\lambda\mu - 4\lambda - 2\mu>0,\,\, \lambda<2 \Big\},\\
\cC_1^+:= &\Big\{(\mu,\lambda)\in\R^2:\,\,\lambda\mu - 4\lambda- 2\mu<0\Big\},\\
\cC_2^+:= &\Big\{(\mu,\lambda)\in\R^2:\,\,\lambda\mu- 4\lambda - 2\mu>0,\,\, \lambda>2\Big\}, \\
%simmetric left
\cC_0^-:= &\Big\{(\mu,\lambda)\in\R^2:\,\,\lambda\mu+4\lambda+2\mu>0,\,\, \lambda>-2 \Big\},\\
\cC_1^-:= &\Big\{(\mu,\lambda)\in\R^2:\,\,\lambda\mu+4\lambda+2\mu<0\Big\},\\
\cC_2^-:=
&\Big\{(\mu,\lambda)\in\R^2:\,\,\lambda\mu+4\lambda+2\mu>0,\,\,
\lambda<-2\Big\}
\end{aligned}
\end{equation}
(see Figures \ref{fig:antisim}-\ref{fig:sohalar0} below).

\begin{figure}[thp] 
\begin{minipage}{.32\textwidth}
\centering
\includegraphics[width=\textwidth]{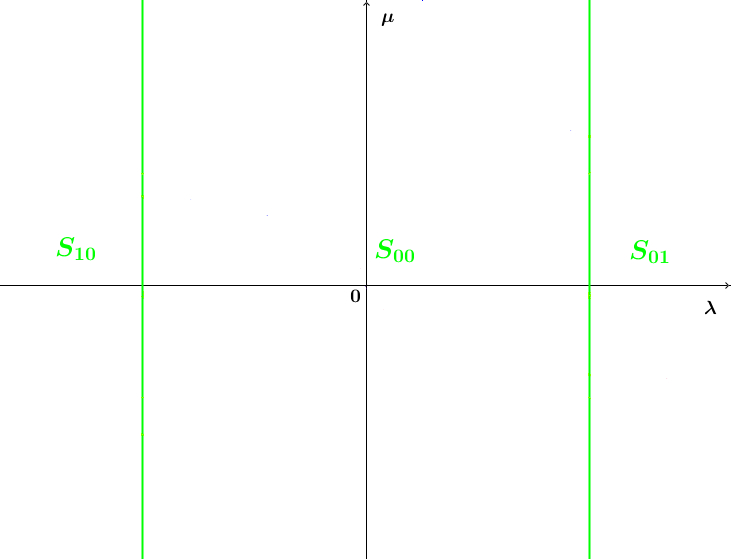}
\subcaption{(a)}\label{fig:antisim}
\end{minipage}
\begin{minipage}{.32\textwidth}
\centering
\includegraphics[width=\textwidth]{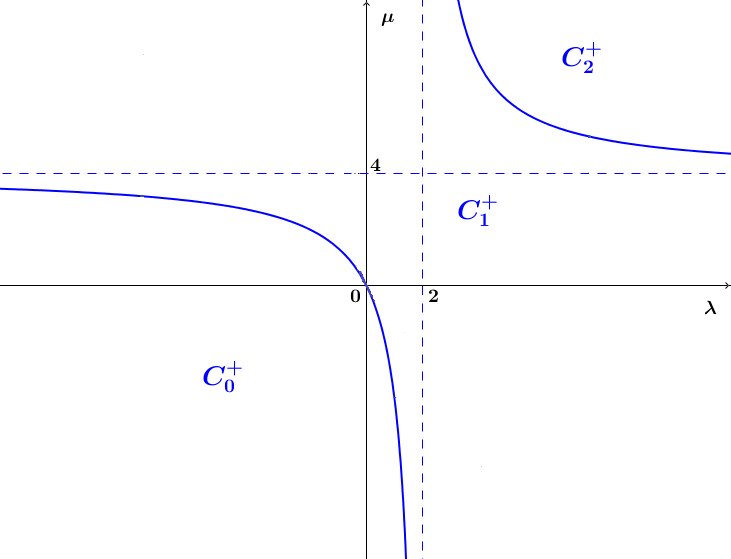}
\subcaption{(b)}%\label{fig:sim_above}
\end{minipage}
\begin{minipage}{.32\textwidth}
\centering
\includegraphics[width=\textwidth]{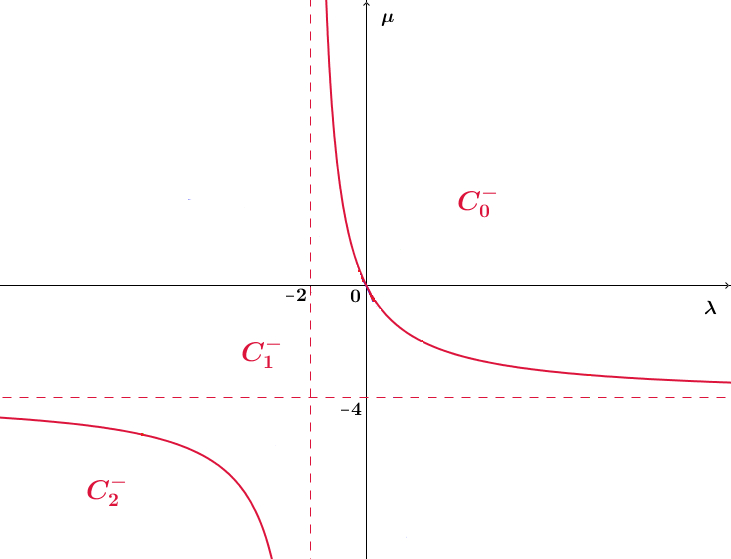}
\subcaption{(c)} %\label{fig:sim_below}
\end{minipage}
\\
\begin{minipage}{.75\textwidth}
\centering 
\includegraphics[width=\textwidth]{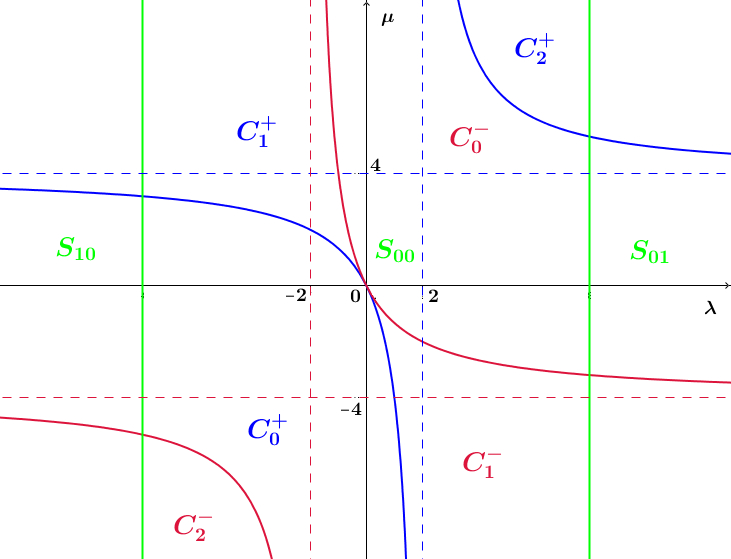} 
\subcaption{(d)}\label{fig:sohalar0}
\end{minipage}
\caption{\small Schematic locations of sets $\cS_{ij}$ and $\cC_i^\pm$ in \eqref{nine_sets}.}
\end{figure}

Let $\miqdor_+(H_{\mu\lambda}(K))$ resp.
$\miqdor_-(H_{\mu\lambda}(K))$ be the number of $H_{\mu\lambda}(K)$
above resp. below its essential spectrum.

\begin{theorem}\label{teo:xosq_kamida}
Let $K\in\T^2$ and $(\mu,\lambda)\in\R^2.$ Then

\begin{equation}\label{above_eigenK}
\begin{aligned}
& (\mu,\lambda)\in \cC_2^+ \cap \cS_{01} & %\hspace*{-14mm}
\Longrightarrow \qquad  \miqdor_+(H_{\mu\lambda}(K)) =3, \\
& (\mu,\lambda)\in \cC_2^+ \Delta \cS_{01}  & \hspace*{-14mm} \Longrightarrow \qquad  \miqdor_+(H_{\mu\lambda}(K)) \ge2,\\
& (\mu,\lambda)\in \cC_1^+\setminus  \cS_{01} &\hspace*{-14mm}   \Longrightarrow \qquad     \miqdor_+(H_{\mu\lambda}(K)) \ge1,\\
& (\mu,\lambda)\in \overline{\cC_0^+}  &  \hspace*{-8mm}
\Longrightarrow \qquad  \miqdor_+(H_{\mu\lambda}(K)) =0,
\end{aligned}
\end{equation}

and
\begin{equation}\label{below_eigenK}
\begin{aligned}
& (\mu,\lambda)\in \cC_2^- \cap \cS_{10} & %\hspace*{-14mm}
\Longrightarrow \qquad \miqdor_-(H_{\mu\lambda}(K)) =3, \\
& (\mu,\lambda)\in \cC_2^- \Delta \cS_{10}  & \hspace*{-14mm} \Longrightarrow \qquad \miqdor_-(H_{\mu\lambda}(K)) \ge2,\\
& (\mu,\lambda)\in \cC_1^-\setminus \cS_{10}  & \hspace*{-14mm} \Longrightarrow \qquad \miqdor_-(H_{\mu\lambda}(K)) \ge1,\\
& (\mu,\lambda)\in \overline{\cC_0^-} & \hspace*{-8mm}
\Longrightarrow  \qquad \miqdor_-(H_{\mu\lambda}(K)) =0,
\end{aligned}
\end{equation}
where $A\Delta B:=(A\setminus B)\cup (B\setminus A)$ is the symmetric difference of sets and $\overline{A}$ is the closure of a set $A.$
\end{theorem}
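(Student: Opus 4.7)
The plan is to combine three ingredients: the explicit count of eigenvalues of the fibre $H_{\mu\lambda}(0)$ obtained from the Fredholm determinants, the monotonicity-type inequality of Theorem \ref{teo:disc_Kvs0}, and the rank structure of the perturbation $V_{\mu\lambda}$.

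First I would decompose $H_{\mu\lambda}(0)$ into its symmetric and antisymmetric parts with respect to permutation of the two coordinate components on $L^{2,e}(\T^2)$; by Lemma \ref{lem:det_zeros_vs_eigen} each eigenvalue of $H_{\mu\lambda}(0)$ is then a zero of either $\Delta_{\mu\lambda}^{(s)}(z)$ or $\Delta_\lambda^{(a)}(z)$. By Proposition \ref{prop:asymp_determinants} together with Lemmas \ref{lem:antisim_zeros}--\ref{lem:sim_negative_zeros}, whether such a zero lies above resp.\ below the essential spectrum is governed precisely by the signs of the constants $\lambda\mu\pm 4\lambda\pm 2\mu$ and $\lambda\mp\tfrac{\pi}{4-\pi}$ appearing in the threshold asymptotics. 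Summing the symmetric and antisymmetric contributions over the regions in \eqref{nine_sets} yields $\miqdor_+(H_{\mu\lambda}(0))=3,2,1,0$ according as $(\mu,\lambda)$ lies in $\cC_2^+\cap\cS_{01}$, $\cC_2^+\Delta\cS_{01}$, $\cC_1^+\setminus\cS_{01}$, or $\overline{\cC_0^+}$, and symmetrically for $\miqdor_-$ in terms of $\cC_i^-$ and $\cS_{i0}$.

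With these counts at $K=0$ in hand, the lower-bound assertions $\miqdor_\pm(H_{\mu\lambda}(K))\ge n$ in \eqref{above_eigenK}--\eqref{below_eigenK} are immediate from Theorem \ref{teo:disc_Kvs0}. To upgrade the two cases with $n=3$ to equality I would invoke the matching upper bound $\miqdor_\pm(H_{\mu\lambda}(K))\le 3$, which follows from the fact that $V_{\mu\lambda}$ is the sum of three rank-one integral operators with kernels $\mu$, $\lambda\cos p_1\cos q_1$ and $\lambda\cos p_2\cos q_2$, combined with the standard Birman--Schwinger (or min--max) bound for rank-three perturbations.

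The delicate case is the vanishing $\miqdor_+(H_{\mu\lambda}(K))=0$ on $\overline{\cC_0^+}$, and symmetrically on $\overline{\cC_0^-}$, since Theorem \ref{teo:disc_Kvs0} is one-sided and does not exclude eigenvalues at $K\ne 0$ when none exist at $K=0$. Here my plan is to write the eigenvalue equation $H_{\mu\lambda}(K)f=zf$ with $z>\cE_{\max}(K)$ in Birman--Schwinger form for arbitrary $K$, reducing it to the vanishing of an explicit $3\times 3$ determinant whose entries are integrals of $(z-\cE_K)^{-1}$ against $1$ and $\cos p_i$, and to show directly that for $(\mu,\lambda)\in\overline{\cC_0^+}$ this determinant has no root in $(\cE_{\max}(K),\infty)$. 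A complementary route is a continuation argument inside the path-connected set $\overline{\cC_0^+}$, which contains the origin where $V_{\mu\lambda}=0$ and trivially $\miqdor_+=0$: any jump of the integer-valued count along the path must be accompanied by an eigenvalue reaching the upper threshold $\cE_{\max}(K)$ for some $K$, i.e.\ the vanishing at the threshold of the $K$-fibre analogue of the asymptotics of Proposition \ref{prop:asymp_determinants}. I expect this fibrewise threshold analysis for arbitrary $K\in\T^2$ --- as opposed to only $K=0$, where the stated lemmas and proposition already give exact formulas --- to be the main technical obstacle of the proof.
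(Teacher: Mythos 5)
Your handling of the lower bounds and of the two equalities $\miqdor_\pm(H_{\mu\lambda}(K))=3$ is exactly the paper's argument: the exact counts at $K=0$ come from the symmetric/antisymmetric splitting and the Fredholm determinants (Theorem \ref{teo:sharpness}), the lower bounds for general $K$ from Theorem \ref{teo:disc_Kvs0}, and the matching upper bound $\le 3$ from the rank-three structure of $V_{\mu\lambda}$ via min--max. Up to that point the proposal is correct and coincides with the paper, whose entire proof of this theorem is the one-line combination of those two theorems.

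The genuine issue is the case $\miqdor_+(H_{\mu\lambda}(K))=0$ on $\overline{\cC_0^+}$ (and its mirror for $\miqdor_-$ on $\overline{\cC_0^-}$), which you rightly single out: Theorem \ref{teo:disc_Kvs0} is one-sided and gives nothing here. However, you only sketch two possible routes (a fibrewise Birman--Schwinger determinant for general $K$, or a continuation argument) without carrying either out, so as written this part of the proof is missing --- and in fact neither route can succeed on all of $\overline{\cC_0^+}$, because the assertion fails there for general $K$. Take $(\mu,\lambda)=(1,-10)$: then $\lambda\mu-4\lambda-2\mu=28>0$ and $\lambda<2$, so $(\mu,\lambda)\in\cC_0^+$. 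At $K=(-\pi,-\pi)\in\T^2$ one has $\cos(K_i/2)=0$, hence $H_0(K)=4I$ and $\sigma_{\ess}(H_{\mu\lambda}(K))=\{4\}$, while $H_{\mu\lambda}(K)=4I+V_{\mu\lambda}$ has the eigenvalue $4+\mu=5$ (eigenfunction $f\equiv1$) above the essential spectrum, so $\miqdor_+(H_{\mu\lambda}(K))=1$; by norm continuity the same holds for all $K$ in a neighbourhood of that corner. The only portion of $\overline{\cC_0^+}$ on which $\miqdor_+(H_{\mu\lambda}(K))=0$ can hold for every $K$ is $\{\mu\le0,\ \lambda\le0\}$, where $V_{\mu\lambda}\le0$ and min--max applies directly. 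So the obstacle you flagged is not merely technical: the fourth implication of \eqref{above_eigenK} (and of \eqref{below_eigenK}) cannot be proved for arbitrary $K$, and the paper's own proof --- which simply invokes Theorems \ref{teo:disc_Kvs0} and \ref{teo:sharpness} --- does not address it either, since those theorems only yield $\miqdor_\pm(H_{\mu\lambda}(0))=0$ at $K=0$.
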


Theorem \ref{teo:xosq_kamida} provides a lower bound for the number
of eigenvalues for both sides of the essential spectrum of
$H_{\mu\lambda}(K)$. Recall that by the min-max principle
$H_{\mu\lambda}(K)$ can have at most three bound states outside its
essential spectrum. 

\begin{remark}
Theorem \ref{teo:xosq_kamida} implies that in some subsets of the $(\mu,\lambda)$-plane (for example in $\cC_1^+\cap \cC_1^-$ ) the eigenvalues of $H_{\mu\lambda}(K)$ can appear simultaneously on both sides of the essential spectrum.   
 \end{remark}

The following result shows that the estimates for
$\miqdor_\pm(H_{\mu\lambda}(K))$ given by Theorem
\ref{teo:xosq_kamida} are sharp.

\begin{theorem}\label{teo:sharpness}
Let $K=0.$ Then all inequalities in \eqref{above_eigenK} and
\eqref{below_eigenK} are in fact equalities. Moreover, the spaces
$$
L^{2,e,s}(\T^2):=\Big\{f\in L^{2,e}(\T^2):\,\, f(p_1,p_2)
=f(p_2,p_1),\,\, p_1,p_2\in\T\Big\}
$$
and
$$
L^{2,e,a}(\T^2):=\Big\{f\in L^{2,e}(\T^2):\,\, f(p_1,p_2)
=-f(p_2,p_1),\,\,p_1,p_2\in\T\Big\}
$$
of symmetric and antisymmetric even functions are invariant with
respect to $H_{\mu\lambda}(0)$ and:
\begin{itemize}
\item[(1)] if $(\mu,\lambda)\in \overline{\cS_{00}},$ then $H_{\mu\lambda}(0)$ has
no antisymmetric bound states outside the essential spectrum;

\item[(2)] if $(\mu,\lambda)\in \cS_{01}$ resp.  $(\mu,\lambda)\in \cS_{10},$ then $H_{\mu\lambda}(0)$
has a unique antisymmetric bound state above resp. below the essential spectrum;

\item[(3)] if $(\mu,\lambda)\in \cC_2^+$ resp.  $(\mu,\lambda)\in \cC_2^-,$  then $H_{\mu\lambda}(0)$ has a exactly two symmetric bound states above resp. below the essential spectrum;

\item[(4)] if $(\mu,\lambda)\in \cC_1^+\cup\partial \cC_2^+$ resp.
$(\mu,\lambda)\in \cC_1^-\cup\partial \cC_2^-,$ then $H_{\mu\lambda}(0)$ has a unique symmetric bound state above resp. below the essential spectrum, where $\partial A$ is the topological
boundary of a set $A;$ 

\item[(5)] if $(\mu,\lambda)\in \overline{\cC_0^+}$ resp.  $(\mu,\lambda)\in \overline{\cC_0^-},$ then $H_{\mu\lambda}(0)$ has no 
symmetric bound states above resp. below the essential spectrum.
\end{itemize}
%(see Figure \ref{fig:antisim}-\ref{fig:sim_below} above).
\end{theorem}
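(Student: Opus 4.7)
My plan is to exploit the coordinate-permutation symmetry of $H_{\mu\lambda}(0)$ to decompose the problem into two simpler eigenvalue problems (on symmetric and antisymmetric even functions), and then count eigenvalues via the Fredholm determinants $\Delta_{\mu\lambda}^{(s)}$ and $\Delta_\lambda^{(a)}$ introduced earlier in the paper, using their threshold asymptotics from Proposition \ref{prop:asymp_determinants}.

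\textbf{Step 1 (Invariance).} Let $P$ denote the unitary involution $(Pf)(p_1,p_2)=f(p_2,p_1)$ on $L^{2,e}(\T^2)$. Since $\cE_0(p)=2(2-\cos p_1-\cos p_2)$ and the integral kernel of $V_{\mu\lambda}$, namely $\mu+\lambda(\cos p_1\cos q_1+\cos p_2\cos q_2)$, are invariant under simultaneous swap of the two coordinates, $H_{\mu\lambda}(0)$ commutes with $P$. Hence $L^{2,e,s}(\T^2)$ and $L^{2,e,a}(\T^2)$ (the $\pm 1$ eigenspaces of $P$) are reducing subspaces, and it suffices to count eigenvalues of the two restrictions separately.

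\textbf{Step 2 (Reduction to Fredholm determinants).} Because $\hat V_{\mu\lambda}$ has rank three and splits according to the basis $1,\cos p_1,\cos p_2$, the Birman--Schwinger principle (or rather the explicit finite-rank perturbation formula underlying Lemma \ref{lem:det_zeros_vs_eigen}) identifies eigenvalues of $H_{\mu\lambda}(0)$ outside $[0,8]$ with zeros of the corresponding Fredholm determinants. Moreover, on $L^{2,e,a}(\T^2)$ the constant function $1$ is orthogonal to every antisymmetric vector, so the $\mu$-term in $V_{\mu\lambda}$ contributes nothing; this is why $\Delta_\lambda^{(a)}(z)$ depends only on $\lambda$. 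On $L^{2,e,s}(\T^2)$ all three directions contribute, producing $\Delta_{\mu\lambda}^{(s)}(z)$.

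\textbf{Step 3 (Counting via threshold asymptotics).} The determinants $\Delta_\lambda^{(a)}(z)$ and $\Delta_{\mu\lambda}^{(s)}(z)$ are real-analytic for $z\in\R\setminus[0,8]$ and strictly monotone / sign-controlled on each side away from the thresholds; all information about whether an eigenvalue is created is encoded in the sign of the leading coefficient of the threshold expansion provided by Proposition \ref{prop:asymp_determinants}. According to that proposition, near the upper edge $z=8$ the leading term of $\Delta_{\mu\lambda}^{(s)}$ is controlled by $\lambda\mu-4\lambda-2\mu$ (with an auxiliary change at $\lambda=2$), and near $z=0$ by $\lambda\mu+4\lambda+2\mu$ (with the auxiliary value $\lambda=-2$); the antisymmetric determinant changes sign according to $\lambda\mp\frac{\pi}{4-\pi}$. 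Combined with Lemmas \ref{lem:antisim_zeros}--\ref{lem:sim_negative_zeros} this yields the exact zero counts: in each connected component of the partition of Figure \ref{fig:sohalar} the number of zeros of each determinant is constant, and crossing any of the defining curves either creates or destroys one zero, on the prescribed side and in the prescribed symmetry class. Evaluating at any convenient reference point in each component (e.g.\ $(\mu,\lambda)=(0,0)$ for $\cS_{00}\cap\cC_1^+\cap\cC_1^-$ where all determinants have no zero; the limits $\mu\to\pm\infty$ or $\lambda\to\pm\infty$ for the unbounded components, where the rank-three perturbation is easily analyzed) pins down the exact counts (1), (2), (3), (4), (5).

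\textbf{Main obstacle.} Steps 1 and 2 are routine. The substance of the proof is Step 3: one must rule out that a zero of $\Delta_{\mu\lambda}^{(s)}$ or $\Delta_\lambda^{(a)}$ can disappear through the \emph{interior} of the resolvent set, i.e.\ that zeros migrate only through the edges of the essential spectrum. This forces a careful monotonicity/convexity analysis of the determinants along smooth paths in the $(\mu,\lambda)$-plane, together with the verification that the leading constants $\lambda\mu\pm4\lambda\pm2\mu$ and $\lambda\pm\frac{\pi}{4-\pi}$ identified in Proposition \ref{prop:asymp_determinants} genuinely dictate sign changes (and are not cancelled by higher-order terms). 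Once those sign analyses are in hand, the partition of Figure \ref{fig:sohalar} precisely matches the discrete changes in the zero counts, and the five cases of the theorem follow by a direct check in each open component together with a limiting argument on the boundary curves, using that the determinants extend continuously to $(\mu,\lambda)\in\partial\cC_i^\pm$ where the number of eigenvalues can only drop by absorption into the essential spectrum.
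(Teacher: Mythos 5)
Your Steps 1 and 2 coincide with the paper's Lemma \ref{lem:separation_K0} and Lemma \ref{lem:det_zeros_vs_eigen}, and they are fine. The gap is in Step 3, and it is genuine. The paper does \emph{not} count zeros by deforming $(\mu,\lambda)$ across the partition of Figure \ref{fig:sohalar}; it counts them at each \emph{fixed} $(\mu,\lambda)$ by combining three ingredients: (i) the sign of $\Delta^{(s)}_{\mu\lambda}$ at the two ends of $(8,+\infty)$ resp. $(-\infty,0)$, namely $\lim_{z\to\pm\infty}\Delta^{(s)}_{\mu\lambda}(z)=1$ versus $\lim_{z\searrow 8}\Delta^{(s)}_{\mu\lambda}(z)=\pm\infty$ or $1-\mu/4$ according to the sign of $\lambda\mu-4\lambda-2\mu$ (Lemmas \ref{lem:sim_positive_zeros}--\ref{lem:sim_negative_zeros}); (ii) the a priori \emph{upper} bound of at most two zeros for $\Delta^{(s)}$ and one for $\Delta^{(a)}$, coming from the rank of the perturbation via the min-max principle (Lemma \ref{lem:det_zeros_vs_eigen}); and (iii) the a priori \emph{lower} bound of at least one eigenvalue of $H^s_{\mu\lambda}$ from the Birman--Schwinger/trace argument of Theorem \ref{teo:KSexistence} (using $\operatorname{tr}V^s_{\mu\lambda}=\mu+2\lambda$). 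Ingredient (iii) is indispensable and absent from your proposal: in the region $\cC_2^+$ the determinant tends to $+\infty$ at $z\searrow 8$ and to $1$ at $z\to+\infty$, so the intermediate value theorem alone only tells you the number of zeros is even --- it cannot distinguish $0$ from $2$. The paper resolves this because $\mu+2\lambda>0$ there forces at least one eigenvalue, hence exactly two zeros. Your route would instead have to compute the count at a reference point of each unbounded component (a nontrivial limit analysis you do not carry out) and then prove that the count is locally constant off the partition curves --- precisely the ``main obstacle'' you name but do not close. (That obstacle can in fact be closed --- zeros cannot leave the real axis by self-adjointness together with a Rouch\'e argument, and cannot escape to infinity since $\Delta\to1$ uniformly on compact parameter sets --- but none of this is in your text, and it is more work than the paper's direct argument.)

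Two further concrete problems. First, your proposed reference point $(\mu,\lambda)=(0,0)$ does not lie in the open component $\cC_1^+\cap\cC_1^-$: the origin satisfies $\lambda\mu-4\lambda-2\mu=0$ and $\lambda\mu+4\lambda+2\mu=0$, so it sits on both hyperbolas, i.e.\ on the boundary between $\overline{\cC_0^\pm}$ and $\cC_1^\pm$; there the count is $0$, not $1$, consistent with case (5) but useless as an interior sample for case (4). Second, the boundary assertions in (4) (the count equals $1$ on $\partial\cC_2^+$) cannot be obtained by your closing remark that ``the number of eigenvalues can only drop by absorption'': $\partial\cC_2^+$ separates a region with $2$ zeros from one with $1$ zero, so a one-sided limiting argument is ambiguous. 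The paper settles it by the explicit finite threshold value $\lim_{z\searrow8}\Delta^{(s)}_{\mu\lambda}(z)=1-\mu/4$, which is negative when $\lambda\mu-4\lambda-2\mu=0$ and $\lambda>2$, forcing a sign change and hence exactly one zero. You need these pointwise sign evaluations (or a full justification of the homotopy argument plus explicit reference computations) to turn your plan into a proof.
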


\section{Proofs of the main results}\label{sec:proofs}

\subsection{Discrete spectrum of $H_{\mu\lambda}(0)$} \label{subsec:disc_spec000}

Unlike the case $K\ne0$ in the case $K=0$ the Fredholm determinant
$\Delta_{\mu\lambda}(0,z)$ is easier to study. Notice that the
operator $H_0(0)$ is the multiplication operator by the symmetric
function $\cE_0(p)=2\epsilon(p)$ in $L^{2,e}(\T^2).$ Hence, both
$L^{2,e,s}(\T^2)$ and $L^{2,e,a}(\T^2)$ are invariant with respect
to $H_0(0).$ Moreover, since
\begin{align*}
&2\cos p_1\cos q_1 + 2\cos p_2\cos q_2 \\
= &(\cos p_1 + \cos p_2)(\cos q_1 + \cos q_2) + (\cos p_1 - \cos
p_2)(\cos q_1 - \cos q_2),
\end{align*}
the spaces of symmetric and antisymmetric even functions in $\T^2$
are invariant also with respect $V_{\mu\lambda}.$  Thus,

\begin{lemma}\label{lem:separation_K0}
$$
\sigma(H_{\mu\lambda}(0)) = \sigma(H_{\mu\lambda}^s) \cup
\sigma(H_{\lambda}^a),
$$
where
$$
H_{\mu\lambda}^s: = H_0(0) + V_{\mu\lambda}^s\qquad \text{and}\qquad
H_{\lambda}^a: = H_0(0) + V_{\lambda}^a,
$$
with
$$
V_{\mu\lambda}^sf(p) =\frac{\mu}{4\pi^2} \int_{\T^2} f(q)\,\d q +
\frac{\lambda}{8\pi^2}\,(\cos p_1 + \cos p_2) \int_{\T^2} (\cos q_1
+ \cos q_2)f(q)\,\d q
$$
and
$$
V_{\lambda}^af(p) =\frac{\lambda}{8\pi^2}\,(\cos p_1 - \cos p_2)
\int_{\T^2} (\cos q_1 - \cos q_2)f(q)\,\d q,
$$
are the restrictions of $H_{\mu\lambda}(0)$ onto $L^{2,e,s}(\T^2)$
and $L^{2,e,a}(\T^2).$
\end{lemma}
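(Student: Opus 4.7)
The plan is to prove Lemma~\ref{lem:separation_K0} by showing that $L^{2,e,s}(\T^2)$ and $L^{2,e,a}(\T^2)$ are mutually orthogonal invariant subspaces of $H_{\mu\lambda}(0)$ whose direct sum is the full space $L^{2,e}(\T^2)$, and then identifying the restrictions of $V_{\mu\lambda}$ to them as $V^s_{\mu\lambda}$ and $V^a_{\lambda}$. The spectral decomposition then follows from the general fact that the spectrum of a self-adjoint operator preserving an orthogonal decomposition is the union of the spectra of its restrictions.

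First I would introduce the symmetrizing and antisymmetrizing projections $P_s f(p_1,p_2) = \tfrac{1}{2}(f(p_1,p_2)+f(p_2,p_1))$ and $P_a f(p_1,p_2) = \tfrac{1}{2}(f(p_1,p_2)-f(p_2,p_1))$ on $L^{2,e}(\T^2)$; these are orthogonal projections with ranges $L^{2,e,s}(\T^2)$ and $L^{2,e,a}(\T^2)$ respectively and $P_s+P_a=I$, giving the desired orthogonal decomposition of the fiber Hilbert space. Next I would observe that the multiplier $\cE_0(p)=2\epsilon(p)$ of $H_0(0)$ is invariant under the transposition $p_1\leftrightarrow p_2$, so multiplication by $\cE_0$ commutes with $P_s$ and $P_a$; hence $H_0(0)$ preserves each of the two subspaces.

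The core step is the treatment of $V_{\mu\lambda}$, where I would use the trigonometric identity quoted in the excerpt to recast the integral kernel $\mu+\lambda(\cos p_1\cos q_1+\cos p_2\cos q_2)$ as the sum of a manifestly symmetric part
\[
\mu+\tfrac{\lambda}{2}(\cos p_1+\cos p_2)(\cos q_1+\cos q_2)
\]
and a manifestly antisymmetric part $\tfrac{\lambda}{2}(\cos p_1-\cos p_2)(\cos q_1-\cos q_2)$. For $f\in L^{2,e,s}(\T^2)$, the antisymmetric factor $\cos q_1-\cos q_2$ makes the second integral vanish (the integrand is antisymmetric under $q_1\leftrightarrow q_2$ while $f$ is symmetric), so $V_{\mu\lambda}f$ reduces to $V^s_{\mu\lambda}f$ and belongs to $L^{2,e,s}(\T^2)$. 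Dually, for $f\in L^{2,e,a}(\T^2)$, the integrals $\int f\,\d q$ and $\int(\cos q_1+\cos q_2)f\,\d q$ vanish, leaving only the antisymmetric term $V^a_\lambda f\in L^{2,e,a}(\T^2)$. This proves invariance under $V_{\mu\lambda}$ and simultaneously identifies the restrictions, so combining with the decomposition of $H_0(0)$ one has $H_{\mu\lambda}(0)=H^s_{\mu\lambda}\oplus H^a_\lambda$ with respect to $L^{2,e,s}(\T^2)\oplus L^{2,e,a}(\T^2)$, from which $\sigma(H_{\mu\lambda}(0))=\sigma(H^s_{\mu\lambda})\cup\sigma(H^a_\lambda)$ is immediate.

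There is no real obstacle here; the only point that deserves care is the bookkeeping of the factor $\tfrac{1}{2}$ coming from the identity and the factors of $(2\pi)^2$ in the definition of $V_{\mu\lambda}$, so that the prefactors $\tfrac{\mu}{4\pi^2}$, $\tfrac{\lambda}{8\pi^2}$ in the stated forms of $V^s_{\mu\lambda}$ and $V^a_\lambda$ come out exactly right.
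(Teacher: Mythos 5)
Your proposal is correct and follows essentially the same route as the paper: the paper also derives the invariance of $L^{2,e,s}(\T^2)$ and $L^{2,e,a}(\T^2)$ from the symmetry of the multiplier $\cE_0(p)=2\epsilon(p)$ together with the product identity $2\cos p_1\cos q_1+2\cos p_2\cos q_2=(\cos p_1+\cos p_2)(\cos q_1+\cos q_2)+(\cos p_1-\cos p_2)(\cos q_1-\cos q_2)$, and then reads off the restrictions $V^s_{\mu\lambda}$, $V^a_\lambda$ and the spectral union. Your write-up merely makes explicit the projections $P_s$, $P_a$ and the vanishing of the cross terms, which the paper leaves implicit.
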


\begin{theorem}\label{teo:KSexistence}
Fix $(\mu,\lambda)\in\R^2.$

\begin{itemize}
 \item[(a)] If $2\lambda + \mu \ge0,$ then  $H_{\mu\lambda}^s$ has at least one eigenvalue greater than $8.$

\item[(b)] If $2\lambda + \mu \le0,$ then  $H_{\mu\lambda}^s$ has at least one negative eigenvalue.
\end{itemize}

\end{theorem}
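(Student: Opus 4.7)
The plan is to reduce part (a) to part (b) via a unitary spectral symmetry and then establish (b) by the min-max principle applied to an explicit trial function, with asymptotics controlled by the two-dimensional lattice Green function.

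\emph{Reduction (a)$\to$(b).} Consider the unitary $U\colon f(p) \mapsto f(p+(\pi,\pi))$ on $L^{2,e,s}(\T^2)$. Since $\cE_0(p+(\pi,\pi)) = 8 - \cE_0(p)$, one has $U H_0(0) U^{-1} = 8I - H_0(0)$. Since $c(p+(\pi,\pi)) = -c(p)$ with $c(p):=\cos p_1 + \cos p_2$, the two sign changes in $V_{\mu\lambda}^s$ cancel in pairs, so $U V_{\mu\lambda}^s U^{-1} = V_{\mu\lambda}^s$. Combining, $U H_{\mu\lambda}^s U^{-1} = 8I - H_{-\mu,-\lambda}^s$, so eigenvalues of $H_{\mu\lambda}^s$ above $8$ correspond to negative eigenvalues of $H_{-\mu,-\lambda}^s$. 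As $2\lambda+\mu \ge 0$ is equivalent to $2(-\lambda)+(-\mu) \le 0$, part (a) follows from (b) applied with $(-\mu,-\lambda)$.

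\emph{Proof of (b), strict case.} By min-max it suffices to find $\phi \in L^{2,e,s}(\T^2)$ with $\langle H_{\mu\lambda}^s\phi,\phi\rangle < 0$. Take
\[
\phi_\epsilon(p) := \frac{1}{\cE_0(p) + \epsilon}, \qquad \epsilon > 0.
\]
Using $\cE_0(p) = p_1^2+p_2^2 + O(|p|^4)$ near $p = 0$, the standard 2D lattice Green-function estimates yield, as $\epsilon \to 0^+$,
\[
\|\phi_\epsilon\|^2 = \tfrac{\pi}{\epsilon} + O(1), \quad \int_{\T^2}\phi_\epsilon\,dp = \pi\log\tfrac{1}{\epsilon} + O(1), \quad \int_{\T^2}c\,\phi_\epsilon\,dp = 2\pi\log\tfrac{1}{\epsilon} + O(1),
\]
(the factor $2$ being $c(0)$), whence also $\int \cE_0|\phi_\epsilon|^2\,dp = \int\phi_\epsilon - \epsilon\|\phi_\epsilon\|^2 = \pi\log(1/\epsilon) + O(1)$. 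Substituting these into $\langle H_{\mu\lambda}^s\phi,\phi\rangle = \int\cE_0|\phi|^2 + \frac{\mu}{4\pi^2}|\int\phi|^2 + \frac{\lambda}{8\pi^2}|\int c\phi|^2$ and dividing by $\|\phi_\epsilon\|^2$ yields
\[
R(\epsilon) \;=\; \frac{(\mu+2\lambda)\,\epsilon\log^2(1/\epsilon)}{4\pi} + O\bigl(\epsilon\log(1/\epsilon)\bigr).
\]
If $\mu + 2\lambda < 0$, the leading $\log^2$ term is negative, so $R(\epsilon) < 0$ for sufficiently small $\epsilon>0$, producing a negative eigenvalue by min-max.

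\emph{The equality $\mu + 2\lambda = 0$ is the main obstacle.} The $\log^2$ coefficient vanishes while the next-order $\epsilon\log(1/\epsilon)$ term is positive, so $\phi_\epsilon$ alone does not yield a negative Rayleigh quotient. I would close this case via the Fredholm determinant $\Delta_{\mu\lambda}^{(s)}(z)$ developed later in Section 4 (its asymptotics are recorded in Proposition \ref{prop:asymp_determinants}). A direct computation, using the identities $b(z) = (2-z/2)a(z) - 2\pi^2$ and $d(z) = (2-z/2)^2 a(z) - (4-z)\pi^2$ with $a,b,d$ the obvious integrals of $(\cE_0-z)^{-1}, c(\cE_0-z)^{-1}, c^2(\cE_0-z)^{-1}$, reduces $\Delta_{\mu\lambda}^{(s)}$ to
\[
\Delta_{\mu\lambda}^{(s)}(z) \;=\; \frac{\mu\lambda + 4\lambda + 2\mu}{8\pi}\,\log\tfrac{1}{|z|} + O(1) \quad\text{as } z \to 0^-,
\]
whereas $\Delta_{\mu\lambda}^{(s)}(-\infty) = 1$. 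On the line $\mu = -2\lambda$ with $\lambda \ne 0$, this leading coefficient equals $-2\lambda^2 < 0$, so $\Delta_{\mu\lambda}^{(s)}(z) \to -\infty$ as $z \to 0^-$, and the intermediate value theorem produces a zero (hence an eigenvalue) in $(-\infty,0)$, completing the proof.
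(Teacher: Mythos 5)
Your proof is correct, but it takes a genuinely different route from the paper's. The paper proves Theorem \ref{teo:KSexistence} in two lines: it notes that the trace of $V_{\mu\lambda}^s$ equals $2\lambda+\mu$ and then invokes the Birman--Schwinger argument of \cite[Theorem 1.3]{KhLA:2020_arxiv}, rewriting the non-symmetric Birman--Schwinger operator on $L^{2,e,s}(\T^2)$ as a small perturbation of a rank-one projection; all the analytic work is outsourced to that reference. You instead (i) exploit the unitary $Uf(p)=f(p+(\pi,\pi))$, under which $\cE_0\mapsto 8-\cE_0$ while $V_{\mu\lambda}^s$ is invariant (the two sign flips of $\cos p_1+\cos p_2$ cancel), giving $UH_{\mu\lambda}^sU^{-1}=8I-H_{-\mu,-\lambda}^s$ and reducing (a) to (b) --- a structural symmetry the paper never uses, and which would also derive Lemma \ref{lem:sim_positive_zeros} from Lemma \ref{lem:sim_negative_zeros}; (ii) treat $2\lambda+\mu<0$ variationally with $\phi_\epsilon=(\cE_0+\epsilon)^{-1}$, whose numerator asymptotics $\int\phi_\epsilon=4\pi^2a(-\epsilon)$ and $\int c\,\phi_\epsilon=8\pi^2c(-\epsilon)$ agree with Proposition \ref{prop:asymp_determinants} and give the leading term $\tfrac{\mu+2\lambda}{4}\log^2(1/\epsilon)$ (the error in $\|\phi_\epsilon\|^2$ is $O(\log(1/\epsilon))$ rather than $O(1)$, but only the sign of the numerator matters, so this is harmless); and (iii) close the borderline case $2\lambda+\mu=0$ via the determinant: the $\log^2$ terms of $(1+\mu a)(1+\lambda b)$ and $2\mu\lambda c^2$ cancel exactly, the surviving coefficient is $-\tfrac{\mu\lambda+4\lambda+2\mu}{8\pi}\ln(-z)$ with $\mu\lambda+4\lambda+2\mu=-2\lambda^2<0$ on the line $\mu=-2\lambda$, $\lambda\ne0$, so $\Delta_{\mu\lambda}^{(s)}(z)\to-\infty$ as $z\nearrow0$ while $\Delta_{\mu\lambda}^{(s)}(-\infty)=1$, and the intermediate value theorem plus the Fredholm correspondence yields the eigenvalue. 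There is no circularity, since Proposition \ref{prop:asymp_determinants} is proved in the appendix independently of this theorem. What each approach buys: the paper's is uniform across the sign cases and applies to a much wider class of potentials; yours is self-contained within this paper, elementary, and the symmetry $U$ halves the case analysis. One caveat common to both proofs: at $(\mu,\lambda)=(0,0)$ the hypothesis $2\lambda+\mu=0$ holds but $H_{00}^s=H_0(0)$ has no eigenvalues, so the statement implicitly assumes $(\mu,\lambda)\neq(0,0)$ (as the introduction does); your argument covers every other point of the borderline.
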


\begin{proof}
Note that the trace of $V_{\mu\lambda}^s$ is exactly $2\lambda+\mu.$
Hence, the proof can be done along the essentially same lines of
\cite[Theorem 1.3]{KhLA:2020_arxiv} introducing the non-symmetric
Birman-Schwinger operator in $L^{2,e,s}(\T^2)$ in place of
$L^2(\T^2)$ and rewriting it  as a small perturbation of rank-one projection (which is not identically zero in $L^{2,e,s}(\T^2)$).
\end{proof}

\begin{remark}
A statement in $L^{2,e,a}(\T^2)$ analogous to Theorem
\ref{teo:KSexistence} cannot be proven using the arguments of \cite[Theorem 1.3]{KhLA:2020_arxiv} since in this case the rank-one projection (which is norm-close to the Birman-Schwinger operator) is identically zero in $L^{2,e,a}(\T^2)$.
\end{remark}

Note that the essential spectrum of all Hamiltonians
$H_{\mu\lambda}(0),$ $H_{\mu\lambda}^s$ and $H_{\mu\lambda}^a$
coincide with the segment $[0,8].$ Let us find an (implicit)
equation for the discrete eigenvalues of $H_{\mu\lambda}(0).$ By
Lemma \ref{lem:separation_K0} it is enough to solve
$$
H_{\mu\lambda}^sf=zf \qquad\text{resp.}\qquad  H_{\lambda}^af=zf
$$
in $z\in \C\setminus [0,8]$ and nonzero $f.$  Here we apply
Fredholm's determinants theory (see, e.g.,
\cite{{S:1977_adv.math}}). For shortness, writing
\begin{align*}
& a(z):=  \frac{1}{4\pi^2}\int_{\T^2} \frac{\d p}{\cE_0(p)-z},\qquad
& \hspace*{-6mm} b(z):= \frac{1}{8\pi^2}\int_{\T^2} \frac{(\cos p_1+\cos p_2)^2\,\d p}{\cE_0(p)-z}, \\
& c(z):=  \frac{1}{8\pi^2} \int_{\T^2}\frac{(\cos p_1 + \cos
p_2)\,\d p}{\cE_0(p)-z},\qquad & \hspace*{-6mm} d(z):=
\frac{1}{8\pi^2}\int_{\T^2}\frac{(\cos p_1 - \cos p_2)^2 \,\d
p}{\cE_0(p)-z},
\end{align*}
define the Fredholm determinants associated to $H_{\mu\lambda}^s$
and $H_\lambda^a,$ respectively, as
$$
\Delta_{\mu\lambda}^{(s)}(z) :=  \,
\Delta_{\mu0}(z)\Delta_{0\lambda}(z) - 2\mu\lambda c(z)^2,
$$
and
$$
\Delta_\lambda^{(a)}(z) := \,  1+\lambda d(z),
$$
where
\begin{align*}
&\Delta_{\mu0}(z) := 1+\mu a(z),\qquad \Delta_{0\lambda}(z) := 1 +
\lambda b(z).
\end{align*}

Note that the functions $a(\cdot),$ $b(\cdot)$ and $d(\cdot)$ are
analytic in $\C \backslash[0,8]$, strictly increasing in $\R
\backslash[0,\,8],$ negative in $(8,+\infty)$ and positive in
$(-\infty, 0).$ Their behaviour near $z=0$ and $z=8$ are established
in the following proposition.

\begin{proposition}\label{prop:asymp_determinants}
There exists $\delta>0$ such that for every $\omega
\in\{a,b,c,d\}$ there exist functions $P_\omega^0,$ and
$Q_\omega^0,$  analytic in the disc $\{z\in\C:\,\,|z|<\delta\}$, and
$P_\omega^1,$ and $Q_\omega^1,$  analytic in the disc
$\{z\in\C:\,\,|z-8|<\delta\}$, such that

\begin{itemize}

\item for every $z\in(-\delta,0)$
\begin{equation}\label{expansion_at0}
\omega(z) = P_\omega^0(z)\ln(-z) + Q_\omega^0(z),
\end{equation}
where
$$
P_a^0(0)=-\frac{1}{4\pi},\quad P_b^0(0)=-\frac{1}{2\pi},\quad
P_c^0(0)=-\frac{1}{4\pi},\quad P_d^0(0)=0
$$
and
$$
Q_a^0(0)=\frac{5\ln2}{4\pi},\quad
Q_b^0(0)=\frac{5\ln2-\pi}{2\pi},\quad
Q_c^0(0)=\frac{5\ln2-\pi}{4\pi},\quad Q_d^0(0)=\frac{4-\pi}{\pi};
$$

\item for every $z\in(8,8+\delta)$
\begin{equation}\label{expansion_at8}
\omega(z) = P_\omega^1(z)\ln(z-8) + Q_\omega^1(z),
\end{equation}
where
$$
P_a^1(8)=\frac{1}{4\pi},\quad P_b^1(8)=\frac{1}{2\pi},\quad 
P_c^1(8)=\frac{1}{4\pi}, \quad P_d^1(8)=0
$$
and
$$
Q_a^1(8)=-\frac{5\ln2}{4\pi},\quad
Q_b^1(8)=-\frac{5\ln2-\pi}{2\pi},\quad Q_c^1(8)=-\frac{5\ln2 -
\pi}{4\pi}, \quad Q_d^1(8)=\frac{\pi-4}{\pi}.
$$
\end{itemize}
\end{proposition}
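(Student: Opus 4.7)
The plan is to establish the expansion \eqref{expansion_at0} at $z=0$; the expansion \eqref{expansion_at8} at $z=8$ then follows from the symmetry $\cE_0(p+\vec\pi)=8-\cE_0(p)$ combined with $\cos(p_i+\pi)=-\cos p_i$, which, after the change of variable $p\mapsto p+\vec\pi$, relates $\omega(8+\eta)$ to an affine function of $\omega(-\eta)$ (the sign flip affects only the odd-degree numerator appearing in $c$, not those in $a$, $b$, $d$).

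First, I would reduce each two-dimensional integral to a one-dimensional one by integrating out $p_2$ using the classical identity $\int_{-\pi}^{\pi}(A-B\cos p_2)^{-1}dp_2=2\pi/\sqrt{A^2-B^2}$ together with its consequences for $\cos p_2$ and $\cos^2 p_2$ against the same kernel; with $A:=4-2\cos p_1-z$ and $B:=2$, this produces
\begin{equation*}
\omega(z) = \frac{1}{2\pi}\int_{-\pi}^{\pi}\frac{N_\omega(\cos p_1,z)}{\sqrt{A^2-4}}\,dp_1 + R_\omega(z),
\end{equation*}
where $N_\omega$ is an explicit polynomial in $\cos p_1$ affine in $z$ and $R_\omega$ is a polynomial in $z$ (for example $N_a\equiv 1$, $R_a\equiv 0$; $N_c=(4-z)/4$, $R_c=-1/4$). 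The substitution $u:=2\sin(p_1/2)$ then turns $A^2-4$ into $(u^2-z)(u^2+4-z)$, and $dp_1$ into $2\,du/\sqrt{4-u^2}$, giving
\begin{equation*}
\omega(z)=\frac{1}{\pi}\int_{-2}^{2}\frac{\widetilde N_\omega(u,z)\,du}{\sqrt{(u^2-z)(u^2+4-z)(4-u^2)}}+R_\omega(z),
\end{equation*}
with $\widetilde N_\omega$ jointly analytic in $(u,z)$.

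Second, I would isolate the logarithmic singularity at $u=0$ by writing $\widetilde N_\omega(u,z)/\sqrt{(u^2+4-z)(4-u^2)}=\widetilde N_\omega(0,0)/4+h_\omega(u,z)$, where $h_\omega$ is analytic in $z$ and vanishes at $(u,z)=(0,0)$. The resulting leading piece is the elementary
\begin{equation*}
\int_{-2}^{2}\frac{du}{\sqrt{u^2-z}}=2\,\mathrm{arcsinh}\!\Big(\frac{2}{\sqrt{-z}}\Big)=-\ln(-z)+4\ln 2+g(z),
\end{equation*}
with $g$ analytic and $g(0)=0$. This immediately yields $P_\omega^0(0)=-\widetilde N_\omega(0,0)/(4\pi)$ and hence the stated values; in particular $P_d^0\equiv 0$ because $(\cos p_1-\cos p_2)^2$ vanishes at $p_1=p_2=0$.

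Third, the regular constant $Q_\omega^0(0)$ collects $R_\omega(0)$, the contribution $\widetilde N_\omega(0,0)\ln 2/\pi$ coming from the $4\ln 2$ inside the $\mathrm{arcsinh}$, and the convergent integral $\tfrac{2}{\pi}\int_0^{2}h_\omega(u,0)u^{-1}\,du$. The latter reduces, through the successive substitutions $v=u^2/4$ and $v=\sin\theta$, to integrals of the form $\int_0^{\pi/4}\tan\phi\,d\phi=\tfrac{1}{2}\ln 2$, from which the closed forms $\tfrac{5\ln 2}{4\pi}$, $\tfrac{5\ln 2-\pi}{2\pi}$, $\tfrac{4-\pi}{\pi}$, etc., emerge. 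Analyticity of $P_\omega^i$ and $Q_\omega^i$ on the claimed discs is immediate once the singular factor $(u^2-z)^{-1/2}$ has been extracted, since the remaining integrands are uniformly analytic in $z$ on the compact interval $[-2,2]$.

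The main obstacle is the bookkeeping in Step 3: combining $R_\omega(0)$, the $4\ln 2$ term from the $\mathrm{arcsinh}$ expansion, and the convergent remainder integral so that each of the eight constants $Q_\omega^i$ comes out with exactly the rational coefficients of $\ln 2$ and of $\pi$ in the statement. At $z=8$ one must additionally track the sign flip on the odd part of the numerator induced by $p\mapsto p+\vec\pi$, which only affects $c$ and shifts the constant part by the $R_c$-contribution.
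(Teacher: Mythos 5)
Your strategy is sound and, modulo one sign issue discussed below, it delivers the proposition; it is close in spirit to the paper's proof but organized differently. The paper also starts by integrating out one variable with the Poisson-kernel identity (its \eqref{1dim_asymp}), but it then treats only the two building blocks $\int(2-\cos q_1-\cos q_2-z)^{-1}$ and $\int\cos q_1\cos q_2\,(2-\cos q_1-\cos q_2-z)^{-1}$ in detail, via the substitution $q_2=2\arctan(\cdot)\,v$ and an explicit splitting into one logarithmically singular piece and several pieces continuous at the threshold ($J_1+J_2+J_3$, resp.\ $I_{11}+I_{12}+I_{13}+I_2$); the remaining integrals are obtained by algebraic identities in the numerator, and the upper edge is asserted to be analogous. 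Your substitution $u=2\sin(p_1/2)$, which reduces everything to the elementary $\int_{-2}^{2}(u^2-z)^{-1/2}\,\d u=-\ln(-z)+4\ln 2+o(1)$ plus a convergent remainder, extracts the same constants more systematically (your bookkeeping checks out for $a$: $\tfrac{\ln 2}{\pi}$ from the arcsinh term plus $\tfrac{\ln 2}{4\pi}$ from the remainder integral gives $\tfrac{5\ln 2}{4\pi}$), and using the lattice symmetry $p\mapsto p+\vec\pi$ at $z=8$ is more economical than redoing the computation there, where the square-root singularities sit at the endpoints $u=\pm2$ rather than at $u=0$. One caveat: to obtain $P_\omega^0,Q_\omega^0$ as \emph{analytic functions}, not just their values at the threshold, you must iterate your splitting by expanding $h_\omega$ in powers of $u^2$ and $z$; the paper also delegates this to a citation, but you should record that this is where the general form of \eqref{expansion_at0} comes from.

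There is, however, a genuine discrepancy in your treatment of $c$ at $z=8$ that you cannot dismiss as ``shifting the constant part by the $R_c$-contribution.'' Since the numerator $\cos p_1+\cos p_2$ is odd under $p\mapsto p+\vec\pi$ while $\cE_0(p+\vec\pi)-(8+\eta)=-\big(\cE_0(p)+\eta\big)$, the two sign changes cancel and your symmetry gives exactly $c(8+\eta)=c(-\eta)$, hence $P_c^1(8)=P_c^0(0)=-\tfrac{1}{4\pi}$ and $Q_c^1(8)=Q_c^0(0)=\tfrac{5\ln 2-\pi}{4\pi}$ --- the \emph{negatives} of the values printed in the proposition. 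Your values are in fact the correct ones: the exact identity $c(z)=\tfrac14\big((4-z)a(z)-1\big)$, which follows from $\cos p_1+\cos p_2=2-\tfrac12\cE_0(p)$, combined with $a(8+\eta)=\tfrac{1}{4\pi}\ln\eta+O(1)$ forces $c(z)\to+\infty$ as $z\searrow 8$, whereas the stated $P_c^1(8)=+\tfrac{1}{4\pi}$ would force $c(z)\to-\infty$; the same sign slip is already present in the $z\searrow 4$ branch of the paper's \eqref{asymptot30}. This is harmless downstream, because only $c(z)^2$ enters $\Delta_{\mu\lambda}^{(s)}$, but as written your argument proves a corrected version of the statement for $c$ at the upper edge, not the printed one --- you must say so explicitly rather than imply that the symmetry reproduces the displayed constants.
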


The proof of Proposition \ref{prop:asymp_determinants} is postponed
to the Appendix \ref{sec:append_A}.

\begin{lemma}\label{lem:det_zeros_vs_eigen}
A number $z\in \C\setminus[0,8]$ is an eigenvalue of
$H_{\mu\lambda}^s$ resp. $H_\lambda^a$  of multiplicity $m\ge1$ if
and only if it is a zero of $\Delta_{\mu\lambda}^{(s)}(\cdot)$
resp.$\Delta_\lambda^{(a)}(\cdot)$ of multiplicity $m.$ Moreover, in
$\R\setminus [0,8]$ the function $\Delta_{\mu\lambda}^{(s)}(\cdot)$
has at least one and at most two zeros and the function
$\Delta_\lambda^{(a)}(\cdot)$ has at most one zero.
\end{lemma}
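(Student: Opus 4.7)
The plan is to reduce the eigenvalue problem for $H^s_{\mu\lambda}$ and $H^a_\lambda$ to finite-dimensional linear systems whose determinants coincide with $\Delta^{(s)}_{\mu\lambda}(z)$ and $\Delta^{(a)}_\lambda(z)$, and then to analyze the resulting scalar functions via monotonicity and rank bounds. To this end, I would first observe that $V^s_{\mu\lambda} = \mu P_1 + (\lambda/2) P_2$, where $P_1$ and $P_2$ are the orthogonal projections in $L^{2,e,s}(\T^2)$ onto $\mathrm{span}\{1\}$ and $\mathrm{span}\{\cos p_1 + \cos p_2\}$ (two mutually orthogonal one-dimensional subspaces), while $V^a_\lambda$ is a rank-one operator with range $\mathrm{span}\{\cos p_1 - \cos p_2\}$. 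For $z \in \C \setminus [0,8]$, the equation $H^s_{\mu\lambda}f = zf$ rewrites as $f = -(H_0(0)-z)^{-1}V^s_{\mu\lambda} f$, giving
$$
f(p) = -\frac{1}{\cE_0(p)-z}\Big(\frac{\mu A}{4\pi^2} + \frac{\lambda B (\cos p_1+\cos p_2)}{8\pi^2}\Big),
$$
with $A := \int_{\T^2} f\, \d q$ and $B := \int_{\T^2}(\cos q_1+\cos q_2) f\, \d q$. Substituting this back into the definitions of $A$ and $B$ produces the $2 \times 2$ linear system $M^s(z)(A,B)^\top = 0$ with
$$
M^s(z) = \begin{pmatrix} 1+\mu a(z) & \lambda c(z) \\ 2\mu c(z) & 1+\lambda b(z)\end{pmatrix}, \qquad \det M^s(z) = \Delta^{(s)}_{\mu\lambda}(z),
$$
and the antisymmetric analogue reduces to the scalar equation $(1+\lambda d(z))C = 0$ with $C := \int(\cos q_1-\cos q_2)f\,\d q$.

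Next I would verify the bijective correspondence between nontrivial eigenfunctions and nontrivial kernel vectors of $M^s(z)$: if $(A,B)=(0,0)$ then $V^s_{\mu\lambda} f = 0$ and $(H_0(0)-z)f = 0$, forcing $f=0$ since $z \notin \sigma(H_0(0))=[0,8]$; conversely, any nonzero $(A,B) \in \ker M^s(z)$ produces a nonzero $f$ because $(\cE_0(p)-z)^{-1}$ and $(\cos p_1+\cos p_2)(\cE_0(p)-z)^{-1}$ are linearly independent in $L^{2,e,s}(\T^2)$. This settles the equivalence of the two zero sets. For the multiplicities, since $H^s_{\mu\lambda}$ is self-adjoint, algebraic and geometric multiplicities coincide and equal $\dim\ker M^s(z_0)$; the Weinstein--Aronszajn / Fredholm determinant identity for finite-rank perturbations (see e.g.\ \cite{S:1977_adv.math}) then identifies this with the order of vanishing of $\Delta^{(s)}_{\mu\lambda}$ at $z_0$. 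In the present $2\times 2$ setting one can also verify this directly: when $\dim\ker M^s(z_0)=2$ the matrix $M^s(z_0)$ itself vanishes, and Taylor-expanding the determinant yields the leading term $(z-z_0)^2 \mu\lambda(a'(z_0)b'(z_0) - 2c'(z_0)^2)$, whose coefficient is nonzero by the strict Cauchy--Schwarz inequality for the positive measure $\d q/(\cE_0(q)-z_0)^2$ and the non-constancy of $\cos q_1+\cos q_2$.

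Finally, for the counting of zeros, I would argue as follows. For $\Delta^{(a)}_\lambda = 1 + \lambda d$, strict monotonicity of $d$ on each of $(-\infty,0)$ and $(8,+\infty)$, together with $d > 0$ on the first component and $d < 0$ on the second, imply that $1 + \lambda d(z) = 0$ admits a solution in only one component (according to the sign of $\lambda$) and at most one such solution, so at most one zero in $\R\setminus[0,8]$. For $\Delta^{(s)}_{\mu\lambda}$, existence of at least one zero is precisely Theorem \ref{teo:KSexistence}, and the upper bound of two zeros follows from the rank structure of $V^s_{\mu\lambda}$: since $V^s_{\mu\lambda}$ has exactly two nonzero eigenvalues $\mu$ and $\lambda/2$ in its orthogonal rank-two self-adjoint decomposition, the Birman--Schwinger/min-max principle bounds the number of eigenvalues of $H^s_{\mu\lambda}$ below $0$ (resp.\ above $8$) by the number of negative (resp.\ positive) eigenvalues of $V^s_{\mu\lambda}$, and hence their sum by $\mathrm{rank}(V^s_{\mu\lambda}) \le 2$.

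The delicate step is the multiplicity matching, which either invokes the Weinstein--Aronszajn formalism or the direct Cauchy--Schwarz analysis above; every other ingredient is an elementary manipulation of the explicit linear systems or an immediate consequence of the monotonicity and boundary behaviour of $a,b,c,d$ recorded before Proposition \ref{prop:asymp_determinants}.
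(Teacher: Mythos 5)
Your proposal is correct and follows essentially the same route as the paper's (very terse) proof: the paper simply cites Fredholm determinant theory for the equivalence of eigenvalues and zeros with multiplicities, and obtains the zero counts from Theorem \ref{teo:KSexistence} together with the min-max principle applied to the rank-two perturbation $V^s_{\mu\lambda}$ and the rank-one perturbation $V^a_\lambda$. You merely make explicit what the paper delegates to references — the reduction to the $2\times2$ system with determinant $\Delta^{(s)}_{\mu\lambda}$, the kernel--eigenfunction bijection, and the Cauchy--Schwarz verification in the doubly degenerate case — all of which check out.
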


\begin{proof}
The first assertion follows from the  Fredholm determinants theory.
By Theorem \ref{teo:KSexistence} $H_{\mu\lambda}^s$ has at least one
eigenvalue outside the essential spectrum. Moreover, since
$H_{\mu\lambda}^s$ is of rank two resp. rank one, by the min-max principle, it has at most two eigenvalues outside the essential spectrum. Hence, by the first part of the proposition
$\Delta_{\mu\lambda}^{(s)}(\cdot)$ has at least one and at most two
zeros in $\R\setminus[0,8].$

The last assertion follows from the rank-one property of $H_\lambda^{(a)}$ and the first part of the proposition.
\end{proof}

Next we study zeros of $\Delta_\lambda^{(a)}.$

\begin{lemma}\label{lem:antisim_zeros}
Let $\lambda\in\R.$

\begin{itemize}
\item[(a)] If $\lambda\in[\frac{\pi}{\pi-4},\frac{\pi}{4-\pi}],$ then $\Delta_\lambda^{(a)}(\cdot)>0$ in $\R\setminus [0,8].$

\item[(b)] If $\lambda<\frac{\pi}{\pi-4},$ then $\Delta_\lambda^{(a)}(\cdot)$ has a unique zero in $(-\infty,0)$ and $\Delta_\lambda^{(a)}(\cdot)>1$ in $(8,+\infty).$

\item[(c)] If $\lambda>\frac{\pi}{4-\pi},$ then $\Delta_\lambda^{(a)}(\cdot)$ has a unique zero in $(8,\infty)$ and $\Delta_\lambda^{(a)}(\cdot)>1$ in $(-\infty,0).$

\end{itemize}
\end{lemma}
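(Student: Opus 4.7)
My plan is to reduce the equation $\Delta_\lambda^{(a)}(z)=1+\lambda d(z)=0$ on $\R\setminus[0,8]$ to $d(z)=-1/\lambda$ and then read off the answer from the monotonicity of $d$ together with its threshold values supplied by Proposition~\ref{prop:asymp_determinants}.

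First I would collect the qualitative properties of $d$ already stated in the excerpt: $d$ is analytic on $\C\setminus[0,8]$, strictly increasing on each of the intervals $(-\infty,0)$ and $(8,\infty)$, positive on the former and negative on the latter, with $d(z)\to 0$ as $|z|\to\infty$. The decisive input from Proposition~\ref{prop:asymp_determinants} is that $P_d^0(0)=P_d^1(8)=0$, so the logarithmic terms drop out and $d$ extends continuously to the thresholds with
$$
\lim_{z\uparrow 0}d(z)=Q_d^0(0)=\frac{4-\pi}{\pi},\qquad \lim_{z\downarrow 8}d(z)=Q_d^1(8)=\frac{\pi-4}{\pi}.
$$
Combined with strict monotonicity, this gives that $d$ is a bijection from $(-\infty,0)$ onto $\bigl(0,\tfrac{4-\pi}{\pi}\bigr)$ and from $(8,\infty)$ onto $\bigl(\tfrac{\pi-4}{\pi},0\bigr)$.

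The equation $d(z)=-1/\lambda$ therefore has a (necessarily unique) solution in $(-\infty,0)$ iff $-1/\lambda\in\bigl(0,\tfrac{4-\pi}{\pi}\bigr)$, which rearranges to $\lambda<\tfrac{\pi}{\pi-4}$, and a unique solution in $(8,\infty)$ iff $-1/\lambda\in\bigl(\tfrac{\pi-4}{\pi},0\bigr)$, i.e.\ $\lambda>\tfrac{\pi}{4-\pi}$. This settles the existence and uniqueness halves of (b) and (c). For the accompanying positivity statements: in (b) we have $\lambda<0$, so on $(8,\infty)$ both $\lambda$ and $d(z)$ are negative and hence $\Delta_\lambda^{(a)}(z)>1$; case (c) is symmetric on $(-\infty,0)$ with $\lambda>0$ and $d>0$.

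For (a), when $\lambda\in[0,\tfrac{\pi}{4-\pi}]$ the inequality $\Delta_\lambda^{(a)}\ge 1$ is immediate on $(-\infty,0)$, while on $(8,\infty)$ the bound $d(z)>\tfrac{\pi-4}{\pi}$ yields
$$
\Delta_\lambda^{(a)}(z)>1+\lambda\cdot\frac{\pi-4}{\pi}\ge 1+\frac{\pi}{4-\pi}\cdot\frac{\pi-4}{\pi}=0;
$$
the subcase $\lambda\in[\tfrac{\pi}{\pi-4},0]$ is handled analogously using $d(z)<\tfrac{4-\pi}{\pi}$ on $(-\infty,0)$. The argument involves no serious obstacle; it is a clean monotone-function analysis once Proposition~\ref{prop:asymp_determinants} is invoked. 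The only delicate point is that the threshold values $\pm\tfrac{4-\pi}{\pi}$ are one-sided limits rather than attained values, so the endpoints $\lambda=\pm\tfrac{\pi}{4-\pi}$ belong to case (a) rather than to (b) or (c); this is exactly why the non-strict inclusion of the endpoints in (a) is consistent with the strict inequalities in (b) and (c).
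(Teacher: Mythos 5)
Your proof is correct and follows essentially the same route as the paper: both arguments rest on the strict monotonicity of $d$ on each component of $\R\setminus[0,8]$, the limit $\Delta_\lambda^{(a)}\to 1$ at infinity, and the one-sided threshold values $\pm\frac{4-\pi}{\pi}$ supplied by Proposition \ref{prop:asymp_determinants}. Recasting the zero condition as $d(z)=-1/\lambda$ and reading off the image intervals of $d$ is just a repackaging of the paper's monotonicity-and-limits argument for $\Delta_\lambda^{(a)}$ itself.
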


\begin{proof}
Note that the map $z\mapsto d(z)$ strictly increases in both
connected components of $\R\setminus[0,8].$ Hence, the equation
$\Delta_\lambda^{(a)}(z)=0$ has at most one zero in
$\R\setminus[0,8].$ Moreover, 
\begin{equation}\label{behav_anti_infty}
\lim\limits_{z\to\pm\infty} \Delta_\lambda^{(a)}(z)=1
\end{equation} 
and by Proposition \ref{prop:asymp_determinants}
\begin{equation}\label{behav_anti_08}
\lim\limits_{z\nearrow 0} \Delta_\lambda^{(a)}(z)=1 +
\frac{(4-\pi)\lambda}{\pi}, \qquad \lim\limits_{z\searrow 8}
\Delta_\lambda^{(a)}(z)=1 + \frac{(\pi-4)\lambda}{\pi}.
\end{equation}
Now the assertions of the lemma follow from the strict monotonicity
and continuity of $z\mapsto\Delta_\lambda^{(a)}(z),$ and
\eqref{behav_anti_infty}-\eqref{behav_anti_08}.
\end{proof}

The following lemma provides the dependence of the number of zeros
of $\Delta_{\mu\lambda}^{(s)}$ in $(8,+\infty)$ on $\mu$ and
$\lambda.$

\begin{lemma}\label{lem:sim_positive_zeros}
Let $(\mu,\lambda)\in\R^2.$

\begin{itemize}
\item[(a)] If $\lambda\mu-4\lambda -2\mu\ge0$ and $\lambda<2,$ then $\Delta_{\mu\lambda}^{(s)}$ has no zeros in $(8,+\infty).$

\item[(b)] If $\lambda\mu-4\lambda -2\mu<0$ or $\lambda\mu-4\lambda -2\mu=0$ with $\lambda>2,$  then $\Delta_{\mu\lambda}^{(s)}$ has a unique zero in $(8,+\infty).$

\item[(c)] If $\lambda\mu-4\lambda -2\mu>0$ and $\lambda>2,$ then $\Delta_{\mu\lambda}^{(s)}$ has two zeros in $(8,+\infty).$
\end{itemize}

\end{lemma}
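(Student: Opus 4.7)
The plan is to bound the number of zeros of $z\mapsto \Delta_{\mu\lambda}^{(s)}(z)$ on $(8,+\infty)$ by combining (i) the two endpoint limits, (ii) an intermediate-value/parity argument, and (iii) the upper bound of at most two zeros supplied by Lemma \ref{lem:det_zeros_vs_eigen}. First I would compute the limits. By Proposition \ref{prop:asymp_determinants}, as $z\to+\infty$ each of $a,b,c$ tends to $0$ so $\Delta_{\mu\lambda}^{(s)}(z)\to 1$; as $z\searrow 8$, substituting $\omega(z)=P_\omega^1(z)\ln(z-8)+Q_\omega^1(z)$ into $\Delta_{\mu 0}\Delta_{0\lambda}-2\mu\lambda c^2$ makes the $\ln^2(z-8)$-terms cancel via $4P_a^1(8)P_b^1(8)=8\,P_c^1(8)^2=\tfrac{1}{2\pi^2}$, and computing the residual logarithmic coefficient gives
\[
\Delta_{\mu\lambda}^{(s)}(z)=-\frac{\lambda\mu-4\lambda-2\mu}{8\pi}\,\ln(z-8)+C_0(\mu,\lambda)+o(1),
\]
where $C_0$ on the curve $\lambda\mu-4\lambda-2\mu=0$ simplifies to $1-\mu/4$.

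Cases (b1) and (b2) are then the easy ones. In (b1) the determinant runs from $-\infty$ to $1$, so the odd number of sign changes, combined with at most two zeros, forces exactly one simple zero. In (b2), the relation $\mu=4\lambda/(\lambda-2)$ with $\lambda>2$ forces $\mu>4$, hence $C_0<0$, and the same argument applies.

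For case (c) the idea is that $\mu>4$ and $\lambda>0$ force both $\Delta_{\mu 0}(\cdot)$ and $\Delta_{0\lambda}(\cdot)$ to possess a unique zero $z_\mu,z_\lambda\in(8,+\infty)$ (monotonicity of $a$ and $b$), and at such a zero only the cross term survives, giving $\Delta_{\mu\lambda}^{(s)}(z_\mu)=-2\mu\lambda\,c(z_\mu)^2$. To pin down the sign I would verify $c>0$ on $(8,+\infty)$ from the partial-fraction identity
\[
c(z)=-\tfrac14\bigl(1+(z-4)a(z)\bigr)
\]
together with $(z-4)a(z)<-1$, which is Jensen's inequality applied to the concave function $x\mapsto 1/x$ on $(-\infty,0)$ and the uniform measure on $\T^2$. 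Since $\mu\lambda>0$, this yields $\Delta_{\mu\lambda}^{(s)}(z_\mu)<0$; together with the positive endpoint values $+\infty$ and $1$, this forces at least two sign changes, hence exactly two zeros.

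The main obstacle is case (a), where both endpoint limits of $\Delta_{\mu\lambda}^{(s)}$ are positive and parity admits either $0$ or $2$ zeros in $(8,+\infty)$. My plan is to rule out $2$ by a connectedness argument: the region $\cC_0^+$ is open and path-connected; throughout $\cC_0^+$ the logarithmic coefficient $-(\lambda\mu-4\lambda-2\mu)/(8\pi)$ is strictly negative, so $\Delta_{\mu\lambda}^{(s)}(z)\to+\infty$ as $z\searrow 8$ and no zero can approach the lower threshold; and since the self-adjoint family $H_{\mu\lambda}^s$ is uniformly bounded on compact parameter sets, no eigenvalue can escape to $+\infty$ either. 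Hence the number of zeros in $(8,+\infty)$ is locally constant on $\cC_0^+$. Evaluating at the reference point $(\mu,\lambda)=(-1,-1)\in\cC_0^+$, and using the identities $b(z)=-\tfrac{(z-4)c(z)}{2}$ and $c(z)=-\tfrac14(1+(z-4)a(z))$ (which together yield $a(z)b(z)-2c(z)^2=c(z)/2>0$), one computes
\[
\Delta_{-1,-1}^{(s)}(z)=1-a(z)-b(z)+\tfrac12\,c(z)>1\qquad\text{on }(8,+\infty),
\]
giving zero zeros at this point and, by local constancy, on all of $\cC_0^+$. A final continuity step, together with $C_0>0$ on the portion of the boundary with $\lambda<2$, extends the count to the closed region in the hypothesis. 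This connectedness step is the delicate part of the argument.
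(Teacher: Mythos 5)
Your endpoint limits, the identity $c(z)=-\tfrac14\bigl(1+(z-4)a(z)\bigr)$, the companion identity $b(z)=-\tfrac{(z-4)c(z)}{2}$, and the resulting positivity $c>0$ on $(8,+\infty)$ all check out, and cases (b) and (c) are correct. In case (c) you replace the paper's appeal to Theorem \ref{teo:KSexistence} (which supplies one eigenvalue above $8$ whenever $2\lambda+\mu\ge0$) by evaluating $\Delta_{\mu\lambda}^{(s)}$ at the zero $z_\mu$ of $\Delta_{\mu0}$, where only the cross term $-2\mu\lambda\,c(z_\mu)^2<0$ survives; this is a legitimate and more self-contained way to produce the interior sign change, at the price of the two extra integral identities.

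Case (a) is where your argument has a genuine gap. The paper's proof rests on one elementary observation that you never make: $\lambda\mu-4\lambda-2\mu\ge0$ together with $\lambda<2$ forces $\lambda\le0$ or $\mu\le0$ (if both were positive, then $\lambda\mu-2\mu=\mu(\lambda-2)<0$, whence $\lambda\mu-4\lambda-2\mu<-4\lambda<0$). Consequently $V_{\mu\lambda}^s$ has at most one positive eigenvalue, the min-max principle caps the number of eigenvalues of $H_{\mu\lambda}^s$ above $8$ at one, hence $\Delta_{\mu\lambda}^{(s)}$ has at most one zero in $(8,+\infty)$; since both endpoint limits are positive, an odd number of zeros (with multiplicity) is impossible, so the count is zero. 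Your deformation argument can be made rigorous on the open set $\cC_0^+$ (local constancy of the eigenvalue count requires the divergence $\Delta_{\mu\lambda}^{(s)}(z)\to+\infty$ as $z\searrow8$ to be uniform on compact parameter sets, which holds because $\lambda\mu-4\lambda-2\mu$ is there bounded away from $0$, together with local boundedness of $\|H_{\mu\lambda}^s\|$), and your reference-point computation at $(-1,-1)$ is correct. But the hypothesis of (a) includes the boundary branch $\lambda\mu-4\lambda-2\mu=0$, $\lambda<2$, and there your ``final continuity step'' does not close the argument: passing to the limit from $\cC_0^+$ only yields $\Delta_{\mu\lambda}^{(s)}\ge0$ on $(8,+\infty)$, which is compatible with a tangential double zero, i.e.\ a multiplicity-two eigenvalue; the finite positive limit $1-\mu/4$ at $z=8$ does not exclude this either. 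Ruling it out requires precisely the rank/min-max observation above (or a substitute), so as written the boundary case is unproved --- and once that observation is in hand, all of (a) follows in two lines with no deformation at all.
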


\begin{proof}
Note that
\begin{equation}\label{behav_sim_infty}
\lim\limits_{z\to+\infty} \Delta_{\mu\lambda}^{(s)}(z)  =1
\end{equation}
and by Proposition \ref{prop:asymp_determinants}
\begin{align*}
\Delta_{\mu\lambda}^{(s)}(z) =  & -\frac{\lambda\mu-4\lambda -2\mu}{8\pi}\,\ln(z-8) \nonumber \\
& + \Big(1-\frac{5\ln 2}{4\pi}\mu - \frac{5\ln 2 -
\pi}{2\pi}\lambda+\frac{5\ln2 -\pi}{8\pi}\lambda\mu \Big) + o(1)
\end{align*}
as $z\searrow 8$ so that
\begin{equation}\label{behav_sim_8}
\lim\limits_{z\searrow 8} \Delta_{\mu\lambda}^{(s)}(z)  =
\begin{cases}
+\infty & \text{if $\lambda\mu-4\lambda -2\mu>0,$} \\
-\infty & \text{if $\lambda\mu-4\lambda -2\mu<0,$} \\
1 - \frac{\mu}{4} & \text{if $\lambda\mu-4\lambda -2\mu=0.$}
\end{cases}
\end{equation}
Notice that $\lambda\mu-4\lambda -2\mu=0$ and $\lambda>2,$ then
$1-\mu/4<0.$

(a) We observe that if $\lambda\mu-4\lambda -2\mu\ge0$ and
$\lambda<2,$ then either $\lambda\le0$ or $\mu\le0.$ Hence, by the
minmax principle, $H_{\mu\lambda}^s$ can have at most one eigenvalue
above the essential spectrum. By Lemma \ref{lem:det_zeros_vs_eigen}
$\Delta_{\mu\lambda}^{(s)}$ has at most one zero in $(8,+\infty).$
Then \eqref{behav_sim_infty} and \eqref{behav_sim_8} imply that
$\Delta_{\mu\lambda}^{(s)}(z)>0$ in $(8,+\infty),$  otherwise $\Delta_{\mu\lambda}^{(s)}(\cdot)$ would cross
$(8,+\infty)$ at least two times.

(b) If $\lambda\mu-4\lambda -2\mu<0$ or $\lambda\mu-4\lambda
-2\mu=0$ with $\lambda>2,$  then by  
\eqref{behav_sim_infty} 
and \eqref{behav_sim_8} the continuous function
$\Delta_{\mu\lambda}^{(s)}(\cdot)$ changes sign in $(8,+\infty)$ so
that the equation $\Delta_{\mu\lambda}^{(s)}(z)=0$ has at least one
zero in $(8,+\infty).$ If this equation had at least  two zeros,
then $\Delta_{\mu\lambda}^{(s)}(\cdot)$ should have at least three
zeros since $\Delta_{\mu\lambda}^{(s)}(\cdot)$ has different signs
at the endpoints of $(8,+\infty).$ This contradicts to Lemma
\ref{lem:det_zeros_vs_eigen}.

(c) If $\lambda\mu-4\lambda -2\mu>0$ and $\lambda>2,$ then
$\mu+2\lambda>0,$ and hence, by Theorem \ref{teo:KSexistence}
$H_{\mu\lambda}^s$ has at least one eigenvalue above the essential
spectrum. Then by Lemma \ref{lem:det_zeros_vs_eigen}
$\Delta_{\mu\lambda}^{(s)}(\cdot)$ has at least one zero in
$(8,+\infty).$ On the other hand, by \eqref{behav_sim_infty} and
\eqref{behav_sim_8} $\Delta_{\mu\lambda}^{(s)}(\cdot)$ has the same
signs at the endpoints of \blue $(8,+\infty),$ \black hence, by
continuity the equation $\Delta_{\mu\lambda}^{(s)}(z)=0$ has at
least two solutions. Now Lemma \ref{lem:det_zeros_vs_eigen} implies
that $\Delta_{\mu\lambda}^{(s)}(\cdot)$ has two zeros in
$(8,+\infty).$
\end{proof}

The zeros of $\Delta_{\mu\lambda}^{(s)}$ in $(-\infty,0)$ are
studied in the following lemma whose proof can be done along the
lines of Lemma \ref{lem:sim_positive_zeros}.

\begin{lemma}\label{lem:sim_negative_zeros}
Let $(\mu,\lambda)\in\R^2.$

\begin{itemize}
\item[(a)] If $\lambda\mu+4\lambda +2\mu\ge0$ and $\lambda>-2,$ then $\Delta_{\mu\lambda}^{(s)}$ has no zeros in $(-\infty,0).$

\item[(b)] If $\lambda\mu+4\lambda+2\mu<0$ or $\lambda\mu+4\lambda+2\mu=0$ with $\lambda<-2,$  then $\Delta_{\mu\lambda}^{(s)}$ has a unique zero in $(-\infty,0).$

\item[(c)] If $\lambda\mu+4\lambda + 2\mu>0$ and $\lambda<-2,$ then $\Delta_{\mu\lambda}^{(s)}$ has two zeros in $(-\infty,0).$
\end{itemize}

\end{lemma}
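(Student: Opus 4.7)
The plan is to mirror the proof of Lemma \ref{lem:sim_positive_zeros}, this time at the lower threshold $z=0$ instead of $z=8$. The key input is the asymptotic of $\Delta_{\mu\lambda}^{(s)}(z)$ as $z\nearrow 0$. Writing
$$\Delta_{\mu\lambda}^{(s)}(z) = 1 + \mu a(z) + \lambda b(z) + \mu\lambda\bigl(a(z)b(z) - 2c(z)^2\bigr)$$
and inserting the expansion \eqref{expansion_at0} from Proposition \ref{prop:asymp_determinants}, the leading $(\ln(-z))^2$ contribution from $a(z)b(z)$ is exactly cancelled by $-2c(z)^2$, since $P_a^0(0)P_b^0(0) = 2P_c^0(0)^2 = \frac{1}{8\pi^2}$. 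Collecting the coefficient of $\ln(-z)$ and simplifying yields
$$\Delta_{\mu\lambda}^{(s)}(z) = -\frac{\lambda\mu + 4\lambda + 2\mu}{8\pi}\ln(-z) + \Bigl(1 + \tfrac{5\ln 2}{4\pi}\mu + \tfrac{5\ln 2-\pi}{2\pi}\lambda + \tfrac{5\ln 2-\pi}{8\pi}\lambda\mu\Bigr) + o(1)$$
as $z\nearrow 0$. When $\lambda\mu + 4\lambda + 2\mu = 0$, substituting $\lambda\mu = -4\lambda - 2\mu$ in the constant term collapses it to $1+\tfrac{\mu}{4}$. Together with $\lim_{z\to-\infty}\Delta_{\mu\lambda}^{(s)}(z) = 1$, this pins down the boundary behaviour on $(-\infty,0)$ in each of the three cases.

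Now each case is handled as in Lemma \ref{lem:sim_positive_zeros}. For (a), if $\lambda\mu + 4\lambda + 2\mu \ge 0$ and $\lambda > -2$, rewrite the inequality as $\mu(\lambda+2) \ge -4\lambda$; since $\lambda+2>0$, either $\lambda\ge 0$ or $\lambda<0$ forces $\mu\ge -4\lambda/(\lambda+2) > 0$. In either case $V_{\mu\lambda}^s$ has at most one negative eigenvalue, so by the min-max principle and Lemma \ref{lem:det_zeros_vs_eigen} the function $\Delta_{\mu\lambda}^{(s)}$ has at most one zero in $(-\infty,0)$; but both boundary limits are strictly positive (on the hyperbola, $\lambda > -2$ forces $\mu > -4$, hence $1+\mu/4>0$), so a single zero is impossible and there are none. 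For (b), the hypothesis makes the two boundary limits of opposite sign ($1$ versus $-\infty$, or $1$ versus $1+\mu/4<0$ when $\lambda<-2$ on the hyperbola forces $\mu<-4$), so continuity gives at least one zero, and the cap of two from Lemma \ref{lem:det_zeros_vs_eigen} combined with sign-change parity yields exactly one. For (c), $\lambda\mu + 4\lambda + 2\mu > 0$ with $\lambda<-2$ forces $\mu<-4$, hence $2\lambda+\mu<0$ and Theorem \ref{teo:KSexistence}(b) furnishes at least one zero; both boundary limits being positive ($+\infty$ at $0$ and $1$ at $-\infty$), zeros come in pairs, so the cap of two gives exactly two.

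The only delicate step is the threshold expansion: the $(\ln(-z))^2$ cancellation and the simplification of the constant term to $1+\mu/4$ on the hyperbola must be extracted with the precise constants of Proposition \ref{prop:asymp_determinants}. This mirrors the $z\searrow 8$ computation already performed in the proof of Lemma \ref{lem:sim_positive_zeros}, so once the asymptotics are in place the rest of the argument is a mechanical transcription, exploiting only continuity, the zero-count bounds from Lemma \ref{lem:det_zeros_vs_eigen}, the min-max principle, and Theorem \ref{teo:KSexistence}(b).
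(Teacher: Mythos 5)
Your proposal is correct and takes essentially the same route as the paper, which simply states that the proof ``can be done along the lines of Lemma \ref{lem:sim_positive_zeros}''; you have carried out that analogy faithfully, and your threshold expansion at $z=0$ (the cancellation of the $(\ln(-z))^2$ term, the coefficient $-\frac{\lambda\mu+4\lambda+2\mu}{8\pi}$, and the collapse of the constant to $1+\mu/4$ on the hyperbola) checks out against Proposition \ref{prop:asymp_determinants}.
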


\begin{proof}[Proof of Theorem \ref{teo:sharpness}]
The assertions of the theorem follow from Lemmas \ref{lem:separation_K0},
\ref{lem:det_zeros_vs_eigen}, \ref{lem:antisim_zeros},
\ref{lem:sim_positive_zeros} and \ref{lem:sim_negative_zeros}.
%
%Now the first assertion (i.e., the assertion of Theorem \ref{teo:xosq_kamida} with equality for $K=0$) follows from the assertions. \blue (a)-(e) \black.
\end{proof}

\subsection{The discrete spectrum of $H_{\mu\lambda}(K)$}

For every $n\ge1$ define
$$
e_n(K;\mu, \lambda):= \sup\limits_{\phi_1,\ldots,\phi_{n-1}\in
L^{2,e}(\T^2)}\,\,\inf\limits_{\psi
\in[\phi_1,\ldots,\phi_{n-1}]^\perp,\,\|\psi\|=1}
(H_{\mu\lambda}(K)\psi,\psi)
$$
and
$$
E_n(K; \mu,\lambda):= \inf\limits_{\phi_1,\ldots,\phi_{n-1}\in
L^{2,e}(\T^2)}\,\,\sup\limits_{\psi
\in[\phi_1,\ldots,\phi_{n-1}]^\perp,\,\|\psi\|=1}
(H_{\mu\lambda}(K)\psi,\psi).
$$
By the minmax principle, $e_n(K;\mu,\lambda)\le \cE_{\min}(K)$ and
$E_n(K;\mu,\lambda)\ge \cE_{\max}(K).$ Moreover, choosing
$\phi_1\equiv1,$ $\phi_2(p)=\cos p_1$ and $\phi_3(p)=\cos p_2$ we
immediately see that $e_n(K;\mu,\lambda) = \cE_{\min}(K)$ and
$E_n(K;\mu,\lambda) = \cE_{\max}(K)$ for all $n\ge4.$

\begin{lemma}\label{lem:monoton_xos_qiymat}
Let $n\ge1$ and $i\in\{1,2\}.$ Then the map
$$
K_i\in\T \mapsto \cE_{\min}(K) - e_n(K;\mu,\lambda)
$$
is non-increasing in $(-\pi,0]$ and non-decreasing in $[0,\pi]$. Similarly, the map
$$
K_i\in\T \mapsto E_n(K;\mu,\lambda) - \cE_{\max}(K) 
$$
is non-increasing in $(-\pi,0]$ and non-decreasing in $[0,\pi]$.
\end{lemma}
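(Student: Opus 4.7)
The plan is to reduce the monotonicity statement for the variational values $e_n(K;\mu,\lambda)$ and $E_n(K;\mu,\lambda)$ to a pointwise (in $\psi$) monotonicity of the associated quadratic forms, and then push the latter through the min--max formulas that define them.

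First I would record the two key algebraic identities
\begin{equation*}
\cE_K(p)-\cE_{\min}(K)=2\cos\tfrac{K_1}2\,(1-\cos p_1)+2\cos\tfrac{K_2}2\,(1-\cos p_2),
\end{equation*}
\begin{equation*}
\cE_{\max}(K)-\cE_K(p)=2\cos\tfrac{K_1}2\,(1+\cos p_1)+2\cos\tfrac{K_2}2\,(1+\cos p_2),
\end{equation*}
both non-negative on $\T^2\times\T^2$ since $\cos(K_i/2)\ge0.$ Crucially, in each expression the factor multiplying the $K$-independent non-negative weight $(1\mp\cos p_i)$ is exactly $\cos(K_i/2),$ which is non-increasing on $[0,\pi]$ and non-decreasing on $[-\pi,0].$ Multiplying by $|\psi(p)|^2$ and integrating over $\T^2$ then shows that, for every unit $\psi\in L^{2,e}(\T^2)$ and every fixed value of $K_j$ with $j\ne i,$ the scalar maps
\begin{equation*}
K_i\mapsto\cE_{\min}(K)-(H_0(K)\psi,\psi)\qquad\text{and}\qquad K_i\mapsto (H_0(K)\psi,\psi)-\cE_{\max}(K)
\end{equation*}
are non-decreasing on $[0,\pi]$ and non-increasing on $[-\pi,0].$ Since $V_{\mu\lambda}$ does not depend on $K,$ the very same monotonicity persists when $H_0(K)$ is replaced by $H_{\mu\lambda}(K).$

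Next I would feed this pointwise monotonicity into the variational representation
\begin{equation*}
\cE_{\min}(K)-e_n(K;\mu,\lambda)=\inf_{\phi_1,\ldots,\phi_{n-1}\in L^{2,e}(\T^2)}\,\,\sup_{\psi\in[\phi_1,\ldots,\phi_{n-1}]^\perp,\,\|\psi\|=1}\bigl[\cE_{\min}(K)-(H_{\mu\lambda}(K)\psi,\psi)\bigr]
\end{equation*}
(obtained by subtracting the minmax formula for $e_n$ from $\cE_{\min}(K)$), and the analogous formula for $E_n(K;\mu,\lambda)-\cE_{\max}(K).$ The underlying elementary fact I use is that a pointwise supremum, as well as a pointwise infimum, of a family of non-decreasing real-valued functions is again non-decreasing (and similarly with "non-increasing" throughout). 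Because the feasible sets $\{(\phi_1,\ldots,\phi_{n-1})\in L^{2,e}(\T^2)^{n-1}\}$ and $\{\psi\in[\phi_1,\ldots,\phi_{n-1}]^\perp:\|\psi\|=1\}$ do not depend on $K,$ applying first $\sup_\psi$ and then $\inf_{\phi_i}$ preserves the monotonicity in $K_i$ established in the previous paragraph, yielding exactly the assertions of the lemma.

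I do not expect any serious technical obstacle here: the whole argument rests on the two explicit identities for $\cE_K(p)\mp\cE_{\min,\max}(K)$ and on the monotonicity of $K_i\mapsto\cos(K_i/2)$ on each half of $\T.$ The only point requiring care is bookkeeping of signs—namely, that taking $-(H_0(K)\psi,\psi)+\cE_{\min}(K)$ reverses the orientation relative to $\cE_K(p)-\cE_{\min}(K),$ so that the minmax quantity ends up \emph{non-decreasing} on $[0,\pi]$ even though the integrand in $(H_0(K)\psi,\psi)-\cE_{\min}(K)$ is itself non-increasing there—and the observation that the supremum/infimum over a $K$-independent feasible set commutes with the direction of monotonicity in $K_i.$
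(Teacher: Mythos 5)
Your proposal is correct and follows essentially the same route as the paper's proof: the explicit identity $\cE_K(p)-\cE_{\min}(K)=2\sum_i\cos\tfrac{K_i}{2}(1-\cos p_i)$ (and its analogue at the upper edge), the monotonicity of $K_i\mapsto\cos\tfrac{K_i}{2}$ on each half of $\T$, the $K$-independence of $V_{\mu\lambda}$, and propagation of the pointwise monotonicity of the quadratic form through the min--max formulas. The only difference is cosmetic: the paper tracks $(H_{\mu\lambda}(K)\psi,\psi)-\cE_{\min}(K)$ and flips the sign at the end, whereas you work with $\cE_{\min}(K)-(H_{\mu\lambda}(K)\psi,\psi)$ directly and spell out the sup/inf bookkeeping more explicitly.
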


\begin{proof}
Without loss of generality we assume that $i=1.$ Given $\psi\in
L^2(\T^2)$ consider
$$
((H_0(K) - \cE_{\min}(K))\psi,\psi)=\int_{\T^2} \sum\limits_{i=1}^2
\cos\tfrac{K_i}{2}\,\big(1-\cos q_i\big)|\psi(q)|^2\,\d q.
$$
Thus, the map $K_1\in\T\mapsto ((H_0(K) - \cE_{\min}(K))\psi,\psi)$
is non-decreasing in $(-\pi,0]$ and is non-increasing 
in $[0,\pi].$ Since $V_{\mu\lambda}$ is independent of $K,$ from the
definition of $e_n(K;\mu,\lambda)$ the map $K_1\in\T\mapsto
e_n(K;\mu,\lambda) - \cE_{\min}(K)$ is non-decreasing in
$(-\pi,0]$ and is non-increasing  in $[0,\pi].$

The case of $K_i\mapsto E_n(K;\mu,\lambda) - \cE_{\max}(K)$ is similar.
\end{proof}

\begin{proof}[Proof of  Theorem \ref{teo:disc_Kvs0}]
From Lemma \ref{lem:monoton_xos_qiymat} for any $K\in\T^2$ and
$m\ge1$
\begin{equation}\label{eq:min_eigen0}
0\le \cE_{\min}(0) - e_m(0;\mu,\lambda) \le \cE_{\min}(K) -
e_m(K;\mu,\lambda),
\end{equation}
and
$$
E_m(K;\mu,\lambda) - \blue \cE_{\max}(K)\black \ge
E_m(0;\mu,\lambda) - \blue \cE_{\max}(0)\black\ge0.
$$
By assumption, $ e_n(0;\mu,\lambda)$ is a discrete eigenvalue of
$H_{\mu\lambda}(0)$ for some $\mu,\lambda\in\R.$ Thus,
$\cE_{\min}(0) - e_n(0;\mu,\lambda)>0,$ and hence, by
\eqref{eq:min_eigen0} and \eqref{eq:essential_spectrum}
$e_n(K;\mu,\lambda)$ is a discrete eigenvalue of $H_{\mu\lambda}(K)$
for any $K\in\T^2.$ Since $e_1(K;\mu,\lambda)\le \ldots \le
e_n(K;\mu,\lambda)<\cE_{\min}(K),$ it follows that
$H_{\mu\lambda}(K)$ has at least $n$ eigenvalue below the essential
spectrum.

The case of $E_n(K;\mu,\lambda)$ is similar.
\end{proof}

\begin{proof}[Proof of Theorem \ref{teo:xosq_kamida}]
Just combine Theorem \ref{teo:disc_Kvs0} with Theorem
\ref{teo:sharpness}.
\end{proof}

\appendix

\section{Proof of Proposition \ref{prop:asymp_determinants}}
\label{sec:append_A}

The general form of expansions \eqref{expansion_at0} and
\eqref{expansion_at8} of $a,b,c,d$ can be established following, for
example, to \cite{LKhL:2012_TMF}. The first two terms of these
expansions, i.e., the pairs $(P_\omega^0(0),Q_\omega^0(0))$ and
$(P_\omega^1(8),Q_\omega^1(8))$ are given by the following lemma.

\begin{lemma}

\begin{align}
& \int_{\T^2} \frac{\d q_1\d q_2}{2-\cos q_1 - \cos q_2 - z} \nonumber \\
&\hspace*{12mm} =
\begin{cases}
-2\pi\ln(-z) + 8\pi \ln2 + o(1) & \text{as $z\nearrow 0,$}\\
2\pi\ln(z-4) - 8\pi\ln 2 + o(1) & \text{as $z\searrow 4,$}
\end{cases}
\label{asymptot10} \\
& \int_{\T^2} \frac{\cos q_1 \cos q_2\,\d q_1\d q_2}{2-\cos q_1 - \cos q_2 - z} \nonumber \\
& \hspace*{12mm} =
\begin{cases}
-2\pi\ln(-z) + 8\pi(\ln 2-1) + o(1) & \text{as $z\nearrow 0,$}\\
2\pi \ln(z-4) - 8\pi(\ln 2-1) + o(1) & \text{as $z\searrow 4.$}
\end{cases}
\label{asymptot20}\\
& \int_{\T^2} \frac{(\cos q_1 + \cos q_2)\d q_1\d q_2}{2- \cos q_1-\cos q_2-z} \nonumber \\
& \hspace*{12mm} =
\begin{cases}
-4\pi\ln(-z) + 16\pi \ln2 -4\pi^2 + o(1) & \text{as $z\nearrow 0,$}\\
4\pi\ln(z-4) - 16\pi\ln 2 +4\pi^2 + o(1) & \text{as
$z\searrow 4,$}
\end{cases}
\label{asymptot30}\\
& \int_{\T^2} \frac{(\cos q_1 + \cos q_2)^2\d q_1\d q_2}{2- \cos q_1-\cos q_2-z}\nonumber \\
& \hspace*{12mm} =
\begin{cases}
-8\pi\ln(-z) + 32\pi \ln2 - 8\pi^2 + o(1) & \text{as $z\nearrow 0,$}\\
8\pi\ln(z-4) - 32\pi \ln 2 +8\pi^2 + o(1) & \text{as $z\searrow 4,$}
\end{cases}
\label{asymptot40}\\
& \int_{\T^2} \frac{(\cos q_1 - \cos q_2)^2\d q_1\d q_2}{2- \cos
q_1-\cos q_2-z}=
\begin{cases}
32\pi  - 8\pi^2 + o(1) & \text{as $z\nearrow 0,$}\\
- 32\pi + 8\pi^2 + o(1) & \text{as $z\searrow 4.$}
\end{cases}
\label{asymptot50}
\end{align}

\end{lemma}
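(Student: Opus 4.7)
The plan is to reduce all five asymptotic evaluations to the single lattice Green's function
$I_1(z):=\int_{\T^2}(A(q)-z)^{-1}\,\d q$, where $A(q):=2-\cos q_1-\cos q_2$, and to obtain each behaviour at $z\searrow 4$ from the corresponding behaviour at $z\nearrow 0$ via the reflection $q\mapsto\vec\pi-q$. That reflection sends $A(q)$ to $4-A(q)$, hence $(A(q)-z)^{-1}$ to $-(A(q)-(4-z))^{-1}$, so each integrand inherits its asymptotic at $z\searrow 4$ from its asymptotic at $z\nearrow 0$ under $z\mapsto 4-z$, with an overall sign determined by the parity of its numerator under $q\mapsto\vec\pi-q$.

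For $I_1(z)$ itself I would integrate out $q_2$ using the identity $\int_{-\pi}^\pi(c-\cos q_2)^{-1}\,\d q_2=2\pi/\sqrt{c^2-1}$ with $c=2-\cos q_1-z>1$, which reduces the problem to the one-dimensional elliptic integral
$$I_1(z)=2\pi\int_{-\pi}^\pi\frac{\d q_1}{\sqrt{(1-\cos q_1-z)(3-\cos q_1-z)}}.$$
The logarithmic singularity as $z\nearrow 0$ arises at $q_1=0$, where $1-\cos q_1-z\sim q_1^2/2-z$; either the classical expansion $K(k)=\tfrac12\log(16/(1-k^2))+o(1)$ as $k\to 1^-$ after rewriting the integral in Legendre form, or a direct matched-asymptotic decomposition into a singular piece (an explicit integral of $1/\sqrt{q_1^2-2z}$ on a small neighbourhood of $0$) and a bounded remainder, yields the leading term $-2\pi\log(-z)$ and the additive constant $8\pi\log 2$.

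The integrals \eqref{asymptot30} and \eqref{asymptot40} then reduce to $I_1$ by polynomial division in $A$ using $\cos q_1+\cos q_2=2-A(q)$. The identities
$$\frac{2-A}{A-z}=\frac{2-z}{A-z}-1,\qquad \frac{(2-A)^2}{A-z}=\frac{(2-z)^2}{A-z}+(A+z-4),$$
together with $\int_{\T^2}\d q=4\pi^2$ and $\int_{\T^2}A(q)\,\d q=8\pi^2$, give
$$\int_{\T^2}\frac{2-A}{A-z}\,\d q=(2-z)I_1(z)-4\pi^2,\qquad \int_{\T^2}\frac{(2-A)^2}{A-z}\,\d q=(2-z)^2 I_1(z)+4\pi^2 z-8\pi^2,$$
and substituting the asymptotic of $I_1$ yields \eqref{asymptot30} and \eqref{asymptot40}. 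Integral \eqref{asymptot20} then comes out as one quarter of the difference of \eqref{asymptot40} and \eqref{asymptot50} via $4\cos q_1\cos q_2=(\cos q_1+\cos q_2)^2-(\cos q_1-\cos q_2)^2$.

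For \eqref{asymptot50} the integrand has no singularity at either band edge, since $(\cos q_1-\cos q_2)^2$ vanishes to second order at both $q=0$ and $q=\vec\pi$, so the integral extends continuously to $z=0$ and $z=4$ and I only need to evaluate the two limits. Integrating out $q_2$ as above leaves a one-dimensional integral that, after the trigonometric substitution $s=\cos(q_1/2)$, reduces to the elementary $\int_0^1(1-s^2)/\sqrt{2-s^2}\,\d s=1/2$, yielding the value $32\pi-8\pi^2$ at $z=0$; the $z=4$ limit is its negative by the reflection symmetry. The main obstacle throughout is the bookkeeping of the non-singular additive constants: unlike the logarithmic coefficients, which are fixed by the local quadratic behaviour of $A(q)$ at its extrema, the constants in front of $\log 2$ and $\pi$ depend on global integration data and must be extracted carefully either from the $K(k)$ expansion (with its specific $\log 16$) or from the auxiliary 1D evaluations, while keeping accurate track of signs under the involution $q\mapsto\vec\pi-q$.
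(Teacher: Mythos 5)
Your proposal is correct and would establish the lemma, but it is organized differently from the paper's proof. The paper computes two primitives directly --- \eqref{asymptot10} and the harder \eqref{asymptot20} --- by integrating out one variable with the identity $\int_\T(u-\cos t)^{-1}\d t=2\pi/\sqrt{u^2-1}$ and then splitting the resulting one--dimensional integral into an explicit elementary piece carrying the $\ln(-z)$ singularity plus remainders continuous at $z=0$ (the decompositions $J_1+J_2+J_3$ and $I_{11}+I_{12}+I_{13}$); the remaining three formulas are left to ``simmetricity'', which amounts to the same polynomial division in $A(q)=2-\cos q_1-\cos q_2$ that you write out. You instead take \eqref{asymptot10} and the \emph{regular} integral \eqref{asymptot50} as the primitives, which buys you something real: \eqref{asymptot50} has no band--edge singularity (its numerator vanishes to second order at both $q=0$ and $q=\vec\pi$), so the only singular computation you ever do is the scalar Green's function, and \eqref{asymptot20} drops out of the identity $4\cos q_1\cos q_2=(\cos q_1+\cos q_2)^2-(\cos q_1-\cos q_2)^2$ rather than requiring the paper's lengthy direct evaluation. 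Your reduction identities and the constants $\int_{\T^2}\d q=4\pi^2$, $\int_{\T^2}A\,\d q=8\pi^2$ check out, as does $\int_0^1(1-s^2)(2-s^2)^{-1/2}\d s=1/2$ and the resulting value $32\pi-8\pi^2$. You are also more careful than the paper about the upper edge: the paper proves only the $z\nearrow0$ case and asserts the rest, whereas your reflection $q\mapsto\vec\pi-q$ makes the passage to $z\searrow4$ a one--line substitution.

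That reflection, applied honestly, uncovers a sign error in the statement you were asked to prove. Since $\cos q_1+\cos q_2$ is \emph{odd} under $q\mapsto\vec\pi-q$ while $A\mapsto 4-A$, the integral in \eqref{asymptot30} satisfies $F(z)=F(4-z)$, so its $z\searrow4$ asymptotics must be $-4\pi\ln(z-4)+16\pi\ln2-4\pi^2+o(1)$, the \emph{same} constants as at $z\nearrow0$ --- not the negated expression printed in the lemma. Your own polynomial--division identity confirms this: $(2-z)I_1(z)-4\pi^2\to -2\cdot\bigl(2\pi\ln(z-4)-8\pi\ln2\bigr)-4\pi^2+o(1)$, and a local check near $q=\vec\pi$ (numerator $\approx-2$, denominator $\approx-(z-4)-|q-\vec\pi|^2/2$) shows the integral indeed diverges to $+\infty$ there. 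So if you follow your parity rule you will contradict the printed second line of \eqref{asymptot30}; trust your computation. The error propagates to the signs of $P_c^1(8)$ and $Q_c^1(8)$ in Proposition \ref{prop:asymp_determinants} but is harmless for the main results, since $c(z)$ enters the determinant $\Delta^{(s)}_{\mu\lambda}$ only through $c(z)^2$ and the product $P_c^1Q_c^1$. The one place where your write--up is thinner than the paper's is the additive constant $8\pi\ln2$ in \eqref{asymptot10}: you correctly identify that it requires global data and name two viable routes (the $K(k)\sim\frac12\ln\frac{16}{1-k^2}$ expansion or a matched split), but you do not carry either out; the paper does this in full, and it is the only genuinely laborious step left in your scheme.
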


\begin{proof}
Since the idea is the same, we prove
\eqref{asymptot10}-\eqref{asymptot50} only for $z<0.$  By
simmetricity, it suffices to prove only \eqref{asymptot10} and
\eqref{asymptot20}. First we observe that  if $|u|>1,$ then
\begin{equation}\label{1dim_asymp}
\int_{\T}\frac{\d t}{u - \cos t} =
\begin{cases}
\frac{-2\pi}{\sqrt{u^2 - 1}} & \text{if $u<-1,$}\\
\frac{2\pi}{\sqrt{u^2 - 1}} & \text{if $u>1.$}
\end{cases}
\end{equation}

Let us prove \eqref{asymptot10}.  Since $2-z-\cos q_2>1$ for any
$z\in\R\setminus [0,4]$ and $q_2\in \T,$ by \eqref{1dim_asymp}
$$
\int_{\T^2} \frac{\d q_1\d q_2}{2 - \cos q_1 - \cos q_2 - z}  =
\int_\T \frac{\d q_2}{\sqrt{(2-z-\cos q_2)^2 - 1}}.
$$
Then using the change of variables $q_2:=2\arctan
\big(\frac{z-2}{z-4}\big)^{1/2} v$ we get
$$
\begin{aligned}
\int_{\T^2} \frac{\d q_1\d q_2}{2 - \cos q_1 - \cos q_2 - z}
= & \frac{8\pi}{|2-z|} \int_0^\infty \frac{\d v}{\sqrt{\big(\tfrac{z^2-4z}{(z - 2)^2} + v^2\big)\big(1 + v^2\big)}}\\
= & J_1(z) + J_2(z) + J_3(z),
\end{aligned}
$$
where
$$
\begin{aligned}
J_1(z): = & \frac{8\pi}{|2-z|} \int_0^1\frac{\d v}{\sqrt{\tfrac{z^2-4z}{(z - 2)^2} + v^2}},\\
J_2(z): = & \frac{8\pi}{|2-z|} \int_0^1 \frac{\big((1+v^2)^{-1/2} - 1\big)\,\d v}{\sqrt{\tfrac{z^2-4z}{(z - 2)^2} + v^2}},\\
J_3(z): = &  \frac{8\pi}{|2-z|}  \int_1^\infty \frac{\d
v}{\sqrt{\big(\tfrac{z^2-4z}{(z - 2)^2} + v^2\big)\big(1 +
v^2\big)}}.
\end{aligned}
$$
Note that
$$
J_1(z) = \frac{8\pi}{|2-z|}  \Big(\ln\Big[ 1+
\sqrt{1+\tfrac{z^2-4z}{(z-2)^2}}\Big] - \ln
\tfrac{\sqrt{z^2-4z}}{|z-2|}\Big)=-2\pi\ln(-z) + 4\pi\ln 2 + o(1)
$$
as $z\nearrow0.$ Moreover, $J_2$ and $J_3$ are continuous at $z=0$
and $J_2(0) = 4\pi  \ln\frac{2}{1+\sqrt{2}}$ and $J_3(0) = 4\pi
\ln(1 + \sqrt2).$ Thus, \eqref{asymptot10} follows.

Now we prove \eqref{asymptot20}. Using $\int_\T \cos q_2\d q_2=0$
and \eqref{1dim_asymp}
$$
\begin{aligned}
\int_{\T^2} \frac{\cos q_1\cos q_2\d q_1\d q_2}{2- \cos q_1 - \cos
q_2 -z}  = & \int_{\T} \frac{(2- z - \cos q_2)\cos q_2\,\d
q_2}{\sqrt{(2- z - \cos q_2)^2-1}}.
\end{aligned}
$$
Thus changing variables as $q_2=2\arctan u$ we get
$$
\int_{\T^2} \frac{\cos q_1\cos q_2\d q_1\d q_2}{2- \cos q_1 - \cos
q_2 -z}= I_1(z) + I_2(z),
$$
where
\begin{align*}
I_1(z) := & 8\pi(1-z) \int_0^\infty\frac{\d u}{(1+ u^2)^2\sqrt{-z + (2-z)u^2}\sqrt{2-z+ (4-z)u^2}}\\
I_2(z) := & 8\pi \int_0^\infty\frac{\big(2u^2 - (3-z)u^4\big)\,\d
u}{(1+ u^2)^2\sqrt{-z + (2-z)u^2}\sqrt{2-z+ (4-z)u^2}}.
\end{align*}
Obviously, $I_2$ is continuous at $z=0$ and $I_2(0)=2\pi(\pi - 5).$
We represent $I_1(z)$ as
$$
I_1(z) = I_{11}(z) + I_{12}(z) + I_{13}(z),
$$
where
\begin{align*}
I_{11}(z):=& \frac{8\pi(1-z)}{\sqrt{2-z}}  \int_0^1\frac{\d u}{\sqrt{-z + (2-z)u^2}}\\
I_{12}(z):=& 8\pi(1-z) \int_0^1\frac{\big[(1+ u^2)^{-2}(2-z+ (4-z)u^2)^{-1/2} - (2-z)^{-1/2}\big]\,\d u}{\sqrt{-z + (2-z)u^2}} \\
I_{13}(z):= & 8\pi(1-z) \int_1^\infty\frac{\d u}{(1+ u^2)^2\sqrt{-z
+ (2-z)u^2}\sqrt{2-z+ (4-z)u^2}}.
\end{align*}
Then
$$
\begin{aligned}
I_{11}(z) = & \frac{8\pi(1-z)}{2-z} \Big[\ln \Big( 1 + \sqrt{1 + \frac{z}{z-2}}\Big) - \ln \sqrt{\frac{z}{z-2}}\Big] \\
= &- 2\pi\ln(-z) + 6\pi\ln2 + o(1)
\end{aligned}
$$
as $z\nearrow 0.$ Note that the functions $I_{12}(z)$ and
$I_{13}(z)$ are continuous at $z=0$ and
$$
I_{12}(0)=4\pi \Big[\ln\frac{2}{1+\sqrt3} -\frac{\pi}{6} +
\frac{2-\sqrt3}{4}\Big]
$$
and
$$
I_{13}(0)=4\pi\Big[\ln\frac{\sqrt3+1}{\sqrt2} -\frac{\pi}{3} +
\frac{\sqrt3}{4}\Big].
$$
Hence
$$
I_1(z) = - 2\pi \ln(-z) + 8\pi\ln2 - 2\pi^2 +2\pi + o(1)
$$
and \eqref{asymptot20} follows.
\end{proof}

\subsection*{Acknowledgment}

The first and third author acknowledge support from the Foundation for Basic Research of the Republic of Uzbe\-kistan (Grant No. OT-F4-66). The second author acknowledges support from the Austrian Science Fund (FWF) project M~2571-N32.

\end{document}